\newcommand{\Con}[1]{\mathbf{Cn}\left(#1\right)}
\newcommand{\acti}{\mathbf{A}}
\newcommand{\inhib}{\mathbf{I}}
\newcommand{\size}[1]{||#1||}
\newcommand{\migtuple}
    {\langle At\comma Ex\comma Ed\comma  \RPr\comma \RPrc\comma
      \RAct\comma \RInh\comma {\cal B}\rangle} 
\newcommand{\red}[1]{{\color{red}{#1}}}
\newenvironment{comment}
  {\begin{framed}\noindent}
  {\end{framed}}
\DeclareSymbolFont{bskarr}{U}{bskarr}{m}{n}
\DeclareFontFamily{U}{bskarr}{ }
\DeclareFontShape{U}{bskarr}{m}{n}{<->bskarr10}{}
\DeclareMathSymbol{\rightarrowTriangle}{\mathrel}{bskarr}{"FB}
\DeclareMathSymbol{\rightarrowtriangle}{\mathrel}{bskarr}{"FE}
\DeclareSymbolFont{arrows}{U}{FdSymbolC}{m}{n}
\DeclareFontFamily{U}{FdSymbolC}{}
\DeclareFontShape{U}{FdSymbolC}{m}{n}{
    <-7.1> s * [1.0] FdSymbolC-Book
    <7.1-> s * [1.0] FdSymbolC-Book
}{}
\DeclareFontShape{U}{FdSymbolC}{b}{n}{
    <-7.1> s * [1.0] FdSymbolC-Medium
    <7.1-> s * [1.0] FdSymbolC-Medium
}{}
\DeclareMathSymbol{\longrightfootline}{\mathrel}{arrows}{173}
\newcommand{\inftr}[1]{ {\langle {#1} \rangle}}
\def\Next{\bigcirc}
\newcommand{\Nat}{\mathbb{N}}
\def\eqdef{\stackrel{\rm def}{=}}
\def\Aprod{\rightarrowTriangle}
\def\Ainhib{\longrightfootline}
\def\Aactiv{\rightarrowtriangle}
\newcommand{\Prod}[1]{\mathbf{Pr}\left(#1\right)}
\newcommand{\hide}[1]{}
\newcommand{\imp}{\rightarrow}
\newcommand{\G}{\mbox{$\cal G$}}
\def\RInh{\stackrel{\longrightfootline}{{\cal R}}}
\def\RPr{\stackrel{\rightarrowtriangle}{{\cal P}}}
\def\RPrc{\stackrel{\rightarrowTriangle}{{\cal P}}}
\def\RAct{{{\cal R}_{\rightarrowtriangle}}}
\def\RInh{{{\cal R}_{\longrightfootline}}}
\def\RPr{{{\cal P}_{\rightarrowtriangle}}}
\def\RPrc{{{\cal P}_{\rightarrowTriangle}}}
\def\RAct{{\cal A}}
\def\RInh{{\cal I}}
\def\RPr{{\cal P}}
\def\RPrc{{\cal C}}
\newcommand{\comma}{,\allowbreak}
\newtheorem{definition}{Definition}
\newtheorem{lemma}{Lemma}
\newtheorem{theorem}{Theorem}
\newtheorem{example}{Example}
\newtheorem{remark}{Remark}
\renewcommand{\red}[1]{#1}
\newcommand{\tocheck}{}
\title{A framework for modelling Molecular Interaction Maps}
\date{}
\author{Jean-Marc Alliot%
\footnote{Institut de Recherche en Informatique de Toulouse. Toulouse
University, Toulouse, France}\\
\texttt{\small jean-marc.alliot@irit.fr}
\and
Marta Cialdea Mayer%
\footnote{Dipartimento di Ingegneria, Universit\`a degli Studi Roma Tre, Rome,
Italy}\\
\texttt{\small cialdea@ing.uniroma3.it}
\and
Robert Demolombe$^*$\\
\texttt{\small demolombe@irit.fr}\\
\and
Mart\'in Di\'eguez$^*$\\
\texttt{\small martin.dieguez@irit.fr}\\
\and
Luis Fari\~nas del Cerro$^*$\\
\texttt{\small farinas@irit.fr}
}
\begin{document}
\maketitle
\begin{abstract}
Metabolic networks, formed by a series of metabolic pathways, are made
of intracellular and extracellular reactions that determine the
biochemical properties of a cell, and by a set of interactions that
guide and regulate the activity of these reactions. 
 Most of these pathways are formed by an
intricate and complex network of chain reactions, and can be
represented in a human readable form using graphs which describe the
cell cycle checkpoint pathways.  

This paper proposes a method to represent Molecular Interaction Maps (graphical
representations of complex metabolic networks) in Linear Temporal Logic. The logical
representation of such networks allows one to reason about them, in order to check,
for instance, whether a graph satisfies a given property  $\phi$, as well as to find out 
which initial conditons would guarantee $\phi$, or else how can the 
the graph be updated 
  in order to satisfy  $\phi$.  

 Both the translation and resolution methods have been implemented in a
tool capable of addressing such questions thanks to a reduction to propositional logic
which allows exploiting classical SAT solvers.

\end{abstract}

\newpage
\section{Introduction}
\label{sec:introduction}
Metabolic networks, formed by a series of metabolic pathways, are
made of intracellular and extracellular reactions
that determine the biochemical properties of a cell by consuming and
producing proteins, and by a set of
interactions that guide and regulate the activity of such
reactions. Cancer, for
example, can sometimes appear in a cell as a result of some pathology
in a metabolic pathway.
These reactions are at the
center of a cell's existence, and are regulated by other proteins,
which can either activate these reactions or inhibit
them.
%
%
These pathways form an intricate
and complex network of chain reactions, and
can be represented in a human
readable form using graphs, called Molecular Interaction Maps (MIMs)
\cite{Kohn:2005,Pommier:2005}
which describe the
cell cycle checkpoint pathways (see for instance Figure~\ref{fig:pommier}).

\begin{figure}
\centering
\includegraphics[width=\linewidth]{./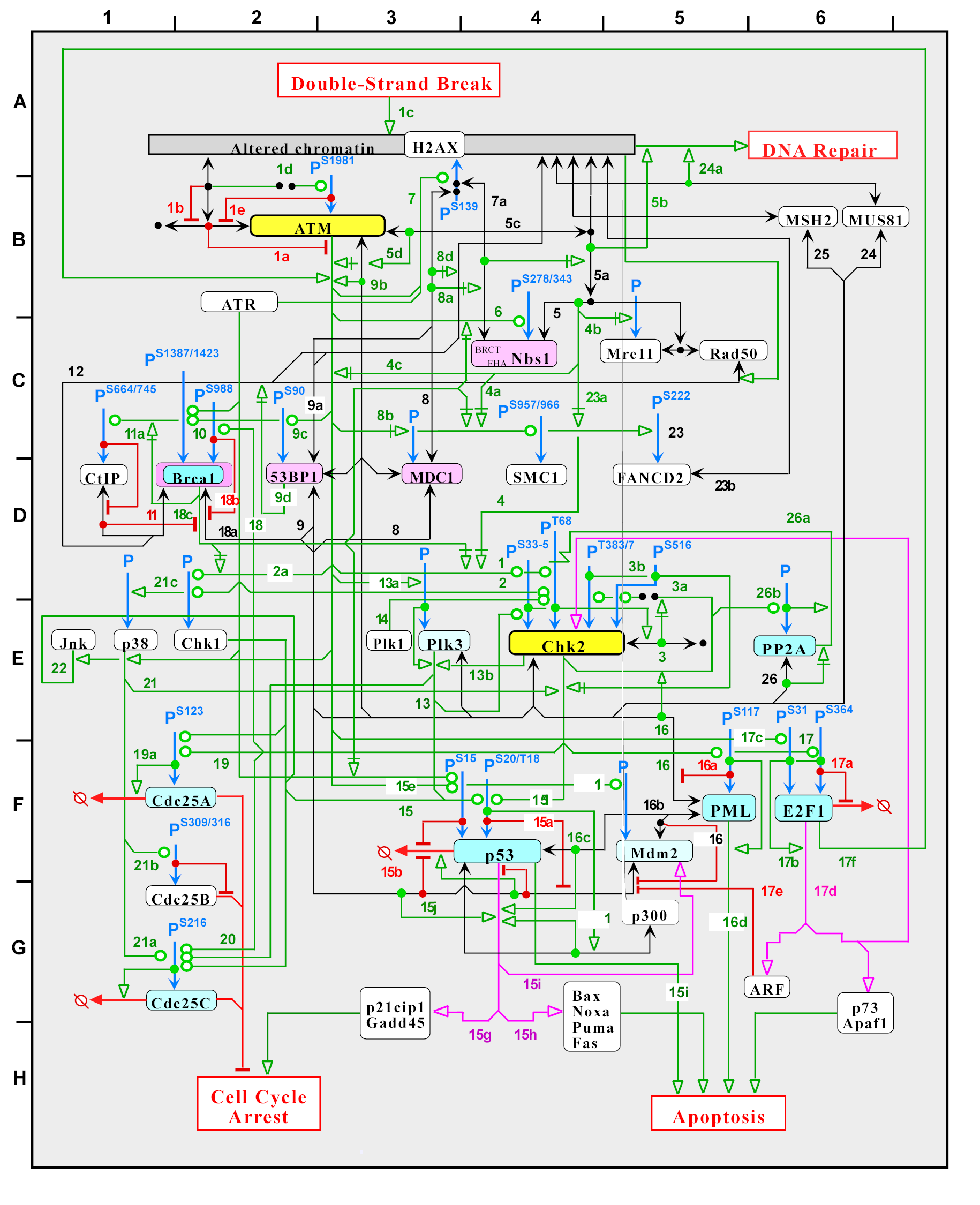}
\caption{atm-chk2/atr-chk1 molecular interaction map.}
\label{fig:pommier}
\end{figure}

Although capital for Knowledge Representation (KR) in biology, MIMs
are difficult to use due to the very large number of elements they may
involve and the intrinsic expertise needed to understand
them.  Moreover, the lack of a formal semantics 
for MIMs makes it difficult to support reasoning tasks commonly
carried out by experts, such as 
checking properties on MIMs, determining how
a MIM can explain a given property or how a MIM can be updated in
order to describe empirically obtained evidences.

This contribution carries on the research undertaken by the authors
aiming at providing a formal background to study MIMs. A first set
of 
works proposed a formalisation of MIMs based on a decidable fragment
of first-order logic \cite{Naji:2013:3,DFO14,DFO14b}. 
\tocheck
In an attempt to find a
simpler representation, without resorting to the expressivity of
first-order logic, other works
\cite{Alliot2016,Alliot2018,ADF16}
 proposed an ad-hoc defined non-monotonic
  logic, called Molecular Interaction Logic (MIL), allowing one to formalize
the notions of  {\em production} and {\em consumption} of
reactives.   In order to formalise the ``temporal evolution'' of a
biological system, MIL formulae are then mapped into Linear Temporal
Logic (LTL) \cite{Pnu77}.

\hide{
\cite{Alliot2018} introduced a formalisation which
includes the notion of {\em production} and {\em consumption} of
reactives.  In order to formalise the ``temporal evolution'' of a
biological system, \cite{Alliot2016,ADF16} proposed an encoding of
MIMs into temporal logic via a translation in terms of temporal logic
programs \cite{AguadoCDPV13}.
}
\hide{
A first prototypal system 
 has been released, able to perform abduction and other tasks 
 on MIMs~\cite{APIA19}, inspired by a pioneer work on
abduction in the field of biological systems~\cite{CP17}. 
}

{This paper embraces the idea, proposed by the above mentioned
  works, that LTL
\tocheck
 is a suitable framework for modelling
  biological systems due to its ability to describe  the
  interaction between components (represented by propositional variables) and
their
  presence/absence in different time instants.
Beyond giving  a formal definition
  of graphs representing MIMs, the paper shows how they can be modeled
  as an LTL theory, by means of a direct ``encoding'', without
  resorting to intermediate (and cumbersome) ad-hoc logics.} 
\hide{
we 
refine and extend the above mentioned works, by 
giving a formal definition
  of graphs representing MIMs and defining how to model them
in Linear Time Temporal Logic  (LTL) \cite{Pnu77}, which appears
to be a suitable framework for reasoning on biological
networks. }
The  logical encoding allows one to formally address 
reasoning tasks, such as, for instance,  checking whether a graph
satisfies some given property $\phi$, as well as  finding out 
which initial conditions would guarantee $\phi$, or else how can the 
the graph be updated 
  in order to satisfy  $\phi$.  
A first prototypal system has been implemented on the
basis of the theoretical work, allowing one to automatically
accomplish reasoning tasks on  MIMs. 

\hide{
Finally, the paper  presents P3M, 
\hide{
a software tool implementing the
encoding method and some interesting reasoning tasks. Grounding
techniques are used in order to reduce the temporal theory to a set of
propositional formulae (assuming bounded time) and therefore exploit 
 classical SAT solvers. 
}
a software platform for modelling, checking
and solving problems on MIMs. This software relies on 
the above introduced methodology and on the 
translation of temporal formulas 
into propositional formulas by grounding, and last by using a SAT
solver.
}


\hide{
 Temporal logic is a suitable framework for modelling
  biological systems due to its ability to describe  the
  interaction between components and 
their
  presence/absence, i.e components are represented by variables and
  interactions are represented by logical formulas. 
}
\hide{
\begin{comment}
The following paragraphs are taken from the IJCAI submission, but they
are probably subsumed by Section \ref{approaches}. To be checked.

Following
  this approach Chabrier et al.~\cite{CC+04} successfully modelled
  biological networks, involving more than 500 genes. In their
  approach, they use temporal logic for checking reachability,
  stability and temporal ordering properties. A similar study on a
  much smaller (although real) biological system has been performed
  in~\cite{BM11}. Also, very famous applications like
  BIOCHAM~\cite{CFS06}, Bio-PEPA~\cite{C08,PP13} and
  ANIMO~\cite{Scholma14} are very popular in the community. We refer
  the reader to~\cite{Sangiovanni,Farinas:14} for an overview on this
  topic.
	
	Regarding MIMs, several
        contributions~\cite{DFO14b,Naji:2013:3,Alliot2016,Alliot2018,ADF16,CP17}
        propose encodings into First-order logic, \emph{Answer Set
          Programming}~\cite{ASP} (ASP) and LTL. The present paper 
        proposes a formal definition of MIMs as  particular automata
        whose associated traces, which can be modelled by 
        LTL formulas, correspond to the biological behaviour of the
        corresponding 
        MIM. 
\end{comment}
}

It is worth pointing out that the adequacy of LTL to model MIMs
  is due to the fact that the latter are qualitative
\tocheck
representations of biological processes. In other terms,
 they model the interactions among the different components
of a biological system without resorting, for instance, to differential
equations like the Systems Biology Markup Language (SBML) \cite{sbml} does.
\hide{
but they do not allow for the specification of
dynamics in an external way, for
instance in terms of differential equations like 
the Systems Biology Markup Language (SBML) \cite{sbml} does.
}

The rest of this paper is organized as follows.  
{Section \ref{approaches} gives a brief overview of modelling
approaches for networks of biological entities.}
Section~\ref{sec:lac}
presents the lac operon that  will be used as a leading  example 
 to introduce all the concepts dealt with by our approach.
 Section~\ref{sec:modelling} describes the fundamental elements and
 concepts of the modelling approach. 
  Section~\ref{sec:mig} presents
Molecular Interaction Graphs (MIGs), which formalize Molecular
Interaction Maps capable of describing and reasoning about general
pathways.  Section~\ref{sec:ltl} explains how MIGs can be represented
by use of Linear Temporal Logic.
Section~\ref{sec:implementation} describes the current state of the
operational implementation of the software
 tool and section~\ref{sec:examples} presents some examples on larger problems.
Finally, Section~\ref{sec:conclusion} concludes this paper and 
discusses possible future work.

\section{Logical Approaches to Biological Systems}
\label{approaches}

The typical objects to be modelled in the framework of systems biology are
\emph{networks} of interacting elements that evolve in time. According
to the features of the network and its properties, various approaches
can be followed,  which can   describe the dynamics of the
system taking the following elements into consideration: 
\begin{itemize}
	\item \textbf{Components}: they are represented by variables,
          which can be either \emph{discrete} or \emph{continuous} depending
          on the requirements of the model. 
	\item \textbf{Interactions}: they are represented by rules
          that specify the dynamical changes in the variables values. These
          interactions can  in their turn be classified according to
 the adopted representation of  time  (\emph{discrete} or
 \emph{continuous}). Finally, the execution of an action can be either
 \emph{stochastic} or not, if a certain degree of
 uncertainty is considered, reflecting the assumption of a noisy environment. 
\end{itemize}

According to the different possible semantics, the various modelling
approaches may be classified as follows~\cite{FH07}:  

\begin{itemize}
	\item Models that involve component quantities and
          deterministic interactions: such models are mathematical,
          inherently quantitative and usually based on ordinary
          differential equations. Tools like Timed Automata
          representations or Continuous-Time Markov Chains are used in
          the construction of models of this category. 
	\item Discrete-value models: they are characterised by the use
          of discrete time. Approaches like executable models based on
          Finite State Machines representations or stochastic models
          such as Discrete-Time Markov Chains belong to this
          category. 
\end{itemize}

\noindent Other hybrid models such as Hybrid Automata or Process
Algebraic Techniques, mix discrete and continuous representation for
both variables and time dynamics. Biological properties
can be distinguished between \emph{qualitative} and \emph{quantitative}:
in the former case, time has an implicit consideration while the
latter involves reasoning on the dynamics of the system along time. To
give an example, \emph{reachability} and  \emph{temporal ordering of
  events} are considered qualitative properties while
\emph{equilibrium states} and \emph{matabolite} dynamics are
quantitative properties.

Gene Regulatory Networks (GRNs) have been very well studied in the
temporal context because the interaction between components may be
easily represented by their presence/absence, i.e components are
represented by  boolean variables and  interactions are represented by
logical rules on their values. Following this approach Chabrier et
al.~\cite{CC+04} successfully modeled a very large network, involving
more than 500 genes. They resorted to  Concurrent Transition Systems
(CTS), allowing one to model modular systems, and
can be then translated into the NuSMV language. They checked 
reachability, stability and temporal ordering properties by the use of CTL. A similar
study on a much smaller (although real) biological system has been
performed in~\cite{BM11}. Here  the LTL specification
syntax and the Spin model checker are used to verify stability properties. 

When a quantitative approach is chosen, the model dimensions drop
drastically. This is essentially due to lack of knowledge on the
parameter values for all the interactions, and 
to the increased computational complexity deriving from a large
model. In this kind of settings, logical approaches have been used to verify temporal
properties on the representations. For instance,
\cite{BRJ+05,Fages04,F08}  use CTL to verify, among other
properties, reachability and stability on different types of
biological networks, and in ~\cite{Brim13} such properties are
checked by using LTL. All these approaches are supported not only for
theoretical results but also for tools and frameworks that allow
biologists to describe a biological network and then verify whether
such representations satisfy some desired properties. Among others,
the systems BIOCHAM~\cite{CFS06}, Bio-PEPA~\cite{C08,PP13} and
ANIMO~\cite{Scholma14} are very popular in the community. 
We refer
  the reader to~\cite{Sangiovanni,Farinas:14} for an overview on this
  topic.

Some
considerations can be made from the study of the aforementioned
contributions: 

\begin{itemize}
	\item the size of the modelled systems is generally very
          small, and a great degree of abstraction and suitable tools
          are needed to deal with large models; 
	\item qualitative approaches are generally enough to analyse a
          large variety of interesting biological properties; 
	\item temporal logic plays an important role in the representation and verification of biological systems.
\end{itemize}

Contrarily to approaches incorporating quantitative information into the
temporal formalisation~\cite{chabrier04}, our contribution belongs to
the category of 
qualitative approaches, since quantitative information in biological
relations, such as the quantity of reactives and their speed of consumption
in a reaction, are not formalised.
{MIMs in fact represent the interaction among the different components
  of the
\tocheck
system and how they evolve in time according to the different reactions. To the best of our knowledge, there is not any contribution where MIMs are used to model quantitative biological information.}


%
\section{A simple example: the lac operon}\label{sec:lac}

\hide{
In this section we revisit the example of the {\em lac}
operon\footnote{The Nobel prize was awarded to Monod, Jacob and Lwoff
  in 1965 partly for the discovery of the lac operon by Monod and
  Jacob~\cite{Monod61}, which was the first genetic regulatory
  mechanism to be understood clearly, and is now a ``standard''
  introductory example in molecular biology classes. See also
  \cite{Lac15}} from~\cite{Alliot2016,ADF16}.
}
	
This section describes a simple example, which represents the
regulation of the  {\em lac} operon (lactose operon),\footnote{The
  Nobel prize was awarded to Monod, Jacob and Lwoff 
  in 1965 partly for the discovery of the lac operon by Monod and
  Jacob~\cite{Monod61}, which was the first genetic regulatory
  mechanism to be understood clearly, and is now a ``standard''
  introductory example in molecular biology classes. See also
  \cite{Lac15}} already used in \cite{Alliot2016,ADF16}.
The lac operon  is an operon required for the transport and
metabolism of lactose in many bacteria. Although glucose is the
preferred carbon source for most bacteria, the lac operon allows for
the effective digestion of lactose when glucose is not available. The
lac operon is a sequence of three genes (lacZ, lacY and lacA) which
encodes 3 enzymes which in turn carry the transformation of lactose
into glucose. We will concentrate here on lacZ which encodes 
$\beta$-galactosidase which cleaves lactose into glucose and
galactose.

The lac operon uses a two-part control mechanism to ensure that the
cell expends energy producing the enzymes encoded by the lac operon
only when necessary. First, in the absence of lactose, the lac repressor
halts production of the enzymes encoded by the lac operon. Second, in the
presence of glucose, the catabolite activator protein (CAP), required
for production of the enzymes, remains inactive.

\begin{figure}[t]
\begin{center}
\includegraphics[scale=0.3]{./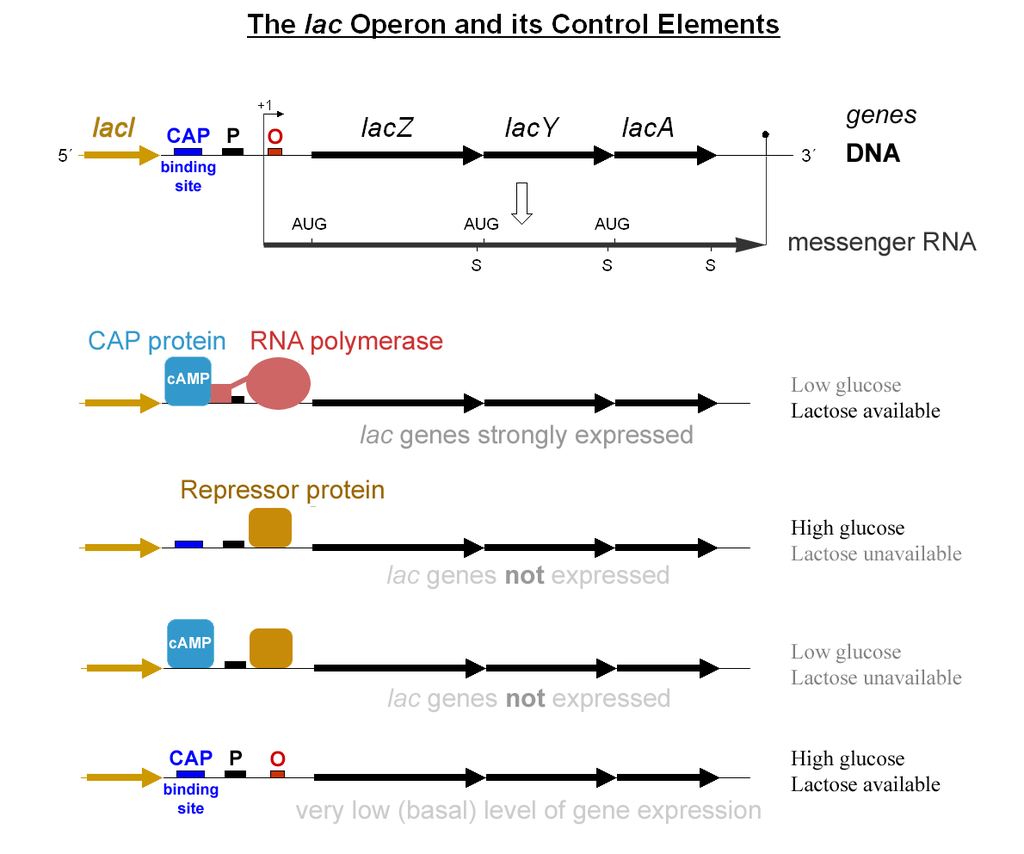}
\end{center}
\caption{The lac operon}
\label{fig:lac}
\end{figure}

Figure~\ref{fig:lac} describes this regulatory mechanism. The
expression of lacZ gene is only possible when RNA polymerase (pink)
can bind to a
promotor site (marked P, black) upstream the gene. This binding is
aided by the cyclic 
adenosine monophosphate (CAMP protein, in blue) which binds before the promotor
on the CAP site (dark blue).

The lacl
gene (yellow) encodes the repressor protein Lacl (yellow) which binds to the
promotor site of the RNA polymerase when lactose is not available,
preventing the RNA polymerase to bind to the promoter and thus
blocking the expression of the 
following genes (lacZ, lacY and lacA): this is a {\em negative regulation},
or {\em inhibition}, as it blocks the production of the proteins.
When lactose is present, one of its isomer, allolactose, binds with
repressor protein Lacl which is no longer able to bind to the promotor site, thus
enabling RNA polymerase to bind to the promotor site and to start
expressing the lacZ gene if CAMP is bound to CAP. 

The CAMP molecule is on the opposite a {\em positive regulation}
molecule, or an {\em activation} molecule, as its presence is necessary
to express the lacZ gene. However, the concentration of CAMP is itself regulated negatively
by glucose: when glucose is present, the concentration of CAMP becomes
low, and thus CAMP does not bind to the CAP site, blocking the
expression of lacZ. Thus glucose prevents the activation by CAMP of the expression of galactosidase from lacZ.

\section{Molecular Interaction Maps (MIMs)}
\label{sec:modelling}

The mechanism described in the previous section is
represented in Figure~\ref{fig:mimlac}, which is an example of
MIM.\footnote{Technically, the generation of CAMP from 
  Adenosine Tri Phosphate (ATP) is blocked by the presence of glucose,
  but we have simplified the graph by simply writing that the presence
  of glucose prevents the activation by CAMP of the expression of
  galactosidase from lacZ.} 
\begin{figure}[t]
\begin{center}\bigskip
\includegraphics[scale=0.4]{./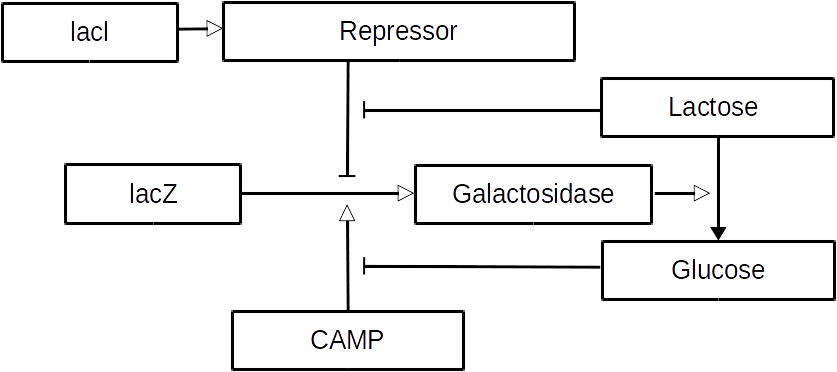}
\end{center}
\caption{Functional representation of the lac operon}
\label{fig:mimlac}
\end{figure}
This example contains all the relations and all the categories of
entities (i.e. the nodes of the graph) 
that we use in our modelling. They are presented below. 

\subsection{Relations}
The relations among the entities
(represented by links in the graphs) represent reactions and can be of
different types:

\begin{description}
\item[Productions] can take two different forms, depending on whether
the reactants are consumed by the reactions or not:

\begin{enumerate}[label=\arabic*)]
\item The graphical notation used 
when a reaction consumes completely the reactant(s) is
  $a_1,\dots,a_n \Aprod b$, meaning that the production of $b$ 
completely consumes $a_1,\dots,a_n$.

 For instance,  in Figure~\ref{fig:mimlac}, 
lactose, when activated by galactosidase produces glucose,
  and is consumed while doing so, which is thus noted by
 $lactose \Aprod glucose$.

\item If the reactants are not completely consumed by the reaction,
  the used notation is 
 $a_1,\dots,a_n \Aactiv b$. Here $b$ is
produced but $a_1,\dots,a_n$ are still present after the
production of $b$.

For example, the
  expression of the 
  lacZ gene to produce galactosidase (or of the lacl gene to produce the
  Lacl repressor protein) does not consume the gene, and we thus have 
  $lacZ \Aactiv galactosidase$.
\end{enumerate}

\item[Regulations] are also of two types: every reaction can be either
{\em inhibited} or {\em activated} by other 
proteins or conditions.
\begin{enumerate}
\item The notation of the type $a_1,\dots,a_n  \Aactiv \dots$ means that the
simultaneous presence of $a_1,\dots,a_n$ activates a production or
another regulation.

 In the example of  Figure \ref{fig:mimlac} the production of galactosidase from the
  expression of the
  lacZ gene is activated by CAMP ($CAMP  \Aactiv(lacZ\Aactiv Galactosidase)$
  expresses activation).

\item The notation $a_1,\dots,a_n \Ainhib\dots$ represents the fact
  that 
simultaneous presence of $a_1,\dots,a_n$ inhibits a production or
another regulation.

In Figure \ref{fig:mimlac}, $Repressor\Ainhib (lacZ\Aactiv Galactosidase)$
represents the fact that production of galactosidase is blocked (or 
  inhibited) by the Lacl repressor protein.
\end{enumerate}

\end{description}

\begin{figure}[t]
\begin{tabular}{cc}
  \begin{minipage}{.45\linewidth}
    \includegraphics[width=\linewidth]{./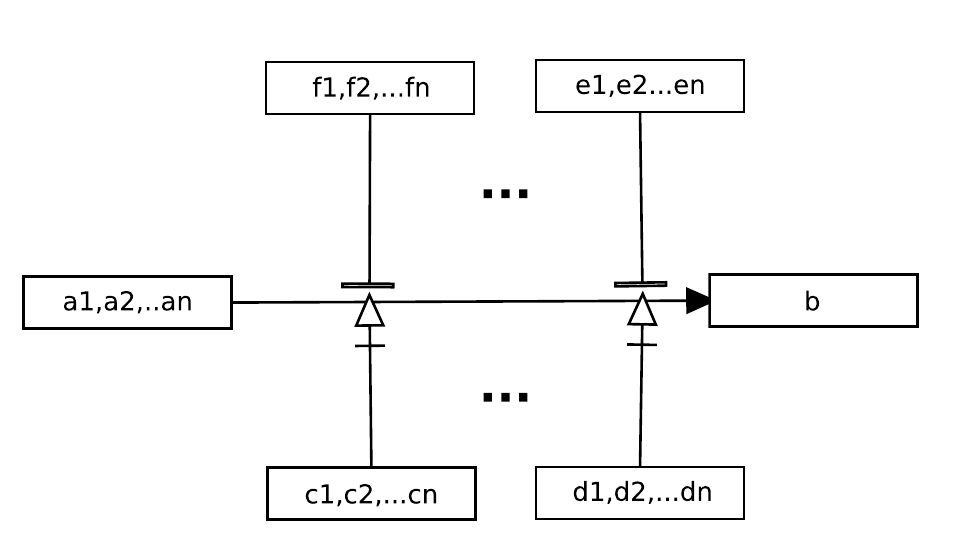}
  \end{minipage}
&
  \begin{minipage}{.45\linewidth}
    \centering
    \includegraphics[width=\linewidth]{./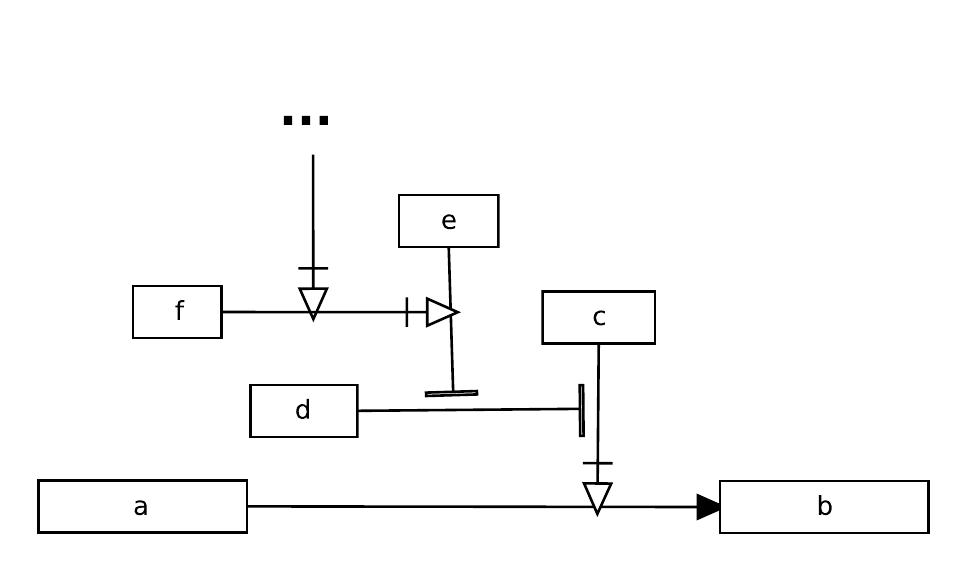}
  \end{minipage}\bigskip
\\
(a) Activations/Inhibitions & (b) Stacking
\end{tabular}
\caption{Activations/Inhibitions and Stacking}
\label{fig:actinh}
\end{figure}

Figure~\ref{fig:actinh}.(a) shows the basic
inhibitions/activations on a reaction: the production of $b$ from
$a_1,\dots,a_n$ is activated by 
 the simultaneous presence of \red{both 
$c_1,\dots,c_n$ {and}  the simultaneous presence of
$d_1,\dots,d_n$}, and inhibited by either the simultaneous presence of
$e_1,\dots,e_n$ {or}  the simultaneous presence of
$f_1,\dots,f_n$.

These regulations are often ``stacked'', on many levels, like shown in
Figure~\ref{fig:actinh}.(b). For example in  
Figure~\ref{fig:mimlac}, the inhibition
by the Lacl repressor protein of the production of galactosidase can
itself be 
inhibited by the presence of lactose, while the activation of the same
production by CAMP is inhibited by the presence of glucose.

\subsection{Types of Entities}
Entities occurring in node labels    can be of  two different types:
\begin{description}
\item[Exogenous:]
the value of an exogenous variable is set once and for all by the
environment or by the experimenter at the start of the simulation and
{\em never} changes through time; 
if the entity is set as present and used in a reaction, the
environment will always provide ``enough'' of it and it will remain
present. 

\item[Endogenous:]
  an endogenous entity can either be  
present or absent at the beginning of the process, as set by the
experimenter, and its value after the start of the process is set
only by the dynamics of the graph.  
\end{description}

These distinctions are fundamental, because the dynamics of 
entities are different and they must be formalized
differently. In practice, the type of an entity is something which is
set by the biologist, according to his professional understanding of the
biological process described by the map. For instance, in
Figure~\ref{fig:mimlac}, the type of the different entities  could be
set as 
follows in order to describe the real behaviour of the lac operon:
lacl, lacZ, CAMP and lactose are initial external conditions of the
model and they do not evolve in time, and are thus exogenous. Note,
  in particular, that lactose can be set as an exogenous entity,
  even if the graph ``says'' that it is consumed when producing
  glucose. Conversely,  galactosidase, the repressor protein and glucose
  can only be produced inside the graph, and are thus endogenous.

It is important to notice that glucose could be set as an exogenous
variable if the experimenter is interested in testing an environment
where glucose is provided externally. Reciprocally, in a more accurate
representation of the lac operon, CAMP would be an endogenous
variable, produced by ATP and regulated by glucose. These graphs are
only a representation and an approximation of the real process,
designed to fit the particular level of description that the
experimenter wants to model.

\hide{
For example, if lactose is set as exogenous in Figure
\ref{fig:mimlac}, then it will always be present, even if its
production of glucose should consume it. If, on the contrary,  lactose is
endogenous, it will be consumed while producing glucose. 

The status of an entity is something which is set by the 
biologist, according to his professional understanding of the biological
process described by the graph.
Normally, 
exogenous entities are those which are not produced inside the graph (they
do not  appear at the right-hand side of a production), while
endogenous entities  appear on the right-hand side of some production.

These distinctions are fundamental, because the dynamics of these
entities are different and they will have to be
formalized differently.

In the above graph, the type of the different entities 
could be initialized as follows: 
\begin{itemize}
\item lacl, lacZ, CAMP and lactose are initial external conditions of the
  model and they do not evolve in time, and are thus
  exogenous. {Note, in particular, that lactose can be set as an
    exougenous entities, even if the graphs ``says'' that it is
    consumed when producing glucose.}
\item Galactosidase, the repressor protein and glucose can only be produced
  inside the graph, and are thus endogenous.
\end{itemize}
It is important to notice that glucose could be set as an exogenous
variable if the experimenter is interested in testing an environment
where glucose is provided externally. Reciprocally, in a more accurate
representation of the lac operon, CAMP would be an endogenous
variable, produced by ATP and regulated by glucose. These graphs are
only a representation and an approximation of the real process,
designed to fit the particular level of description that the
experimenter wants to model. 
}

Although MIMs
 may contain also other kinds of entities or
links, the two kind of entitities and four kinds of interactions
presented above are all
that is needed to 
build the Molecular Interactions Maps we are using in this paper.

\subsection{Temporal evolution}

A MIM can be considered as an automaton which produces sequences of
states of its entities and Linear Temporal Logic formulas can well 
 describe such sequences of states. Time is supposed to be discrete,
and all relations (productions/consumptions) that {\em can} be
executed {\em are} executed simultaneously at each time step.
 An entity
can have two states (or values): absent (0) or present (1). When an
entity is consumed, it becomes absent and when it is produced it
becomes present. {In  other terms, since quantities are not
\tocheck
  taken into account, due also to  the lack of reliable data thereon, 
reactions do not contend to get use of given
  resources: if an entity is present, its quantity is assumed to be
  enough to be used by all reactions needing it.}

This behaviour might look simplistic, as it does not take into account
the kinetic of reactions, {but it reflects a choice underlying
\tocheck
  MIMs representation framework and, as a matter of fact, it is
  nevertheless adequate to handle many problems.}

The software tool that will be described in Section
\ref{sec:implementation}
 provides {default values both for the variables and for
\tocheck
   their classification as exogenous or endogenous}. However, the user
 can modify such default settings through
the graphical interface. 

\hide{
In the software tool that will be described in Section
\ref{sec:implementation}, 
the initial values of the variables must be provided by the user. The
system however provides ``default'' values for them based on some
simple rules; these default values can be modified by the user through
the graphical interface. 
}
\hide{
\subsection{Problems solved}
\label{problems-solved}
The tool can tackle three kinds of problems:
\begin{description}
\item[Graph validation:]
  it consists in checking whether
  ${\cal G}$ satisfies some property expressed in temporal logic and
  it can be performed by doing \emph{model checking}~\cite{HR04} on
  the temporal translation. \uline{a different citation for model checking?}
  
\item[Graph querying:]
  it consists in finding the initial
  conditions that make ${\cal G}$ satisfy some temporal property
  $\varphi$. This operation is equivalent to perform logical \emph{abduction}
  ~\cite{CP17}. \uline{an additional, more general, citation for abduction?}
  
\item[Graph updating:]
  it consists in turning ${\cal G}$ into
  a new graph ${\cal G'}$ that satisfies  some property
  $\varphi$ which is not satisfied by the original ${\cal G}$. It is
  equivalent to apply \emph{model updating}~\cite{DMN06}. 
\end{description}
  }

\section{Molecular Interaction Graphs}\label{sec:mig}

This section is devoted to define
  \emph{Molecular Interaction Graphs} (MIGs), 
 the graph structures 
which are the formal
  representations of MIMs.  The concept of
  \emph{trace}
will also be defined, with the aim of 
  characterising the dynamic behaviour of a MIM.

A MIG is essentially a graph whose vertices are identified with finite
sets of atoms, each of which represents a molecule. %
 Productions are represented by  links connecting
vertices, while regulations are links whose origin is a vertex and whose
target is another link.  
\hide{
The recursive nature of the definition of
edges representing regulations must be managed with some care, in
order to rule out the possibility of having ``circular regulations'',
{i.e.} edges regulating themselves.
}


\begin{definition}[Molecular Interaction Graph]
\label{def:mig}
A Molecular Interaction Graph (MIG) $\G$ is a  tuple  
{$\migtuple$}
where
\begin{itemize}
\item
 $At$ is a \emph{finite} set of
atoms, {partitioned into the sets $Ex$ (the set of {\em exogenous}
  atoms) and $Ed$ (the set of {\em endogenous} ones), 
\item ${\cal B}$ is a set of
  literals from atoms in $At$ (the {\em initial conditions}),} 
\item $\RPr$ and $\RPrc$ are sets of {\em productions}:
\[\begin{array}{lll}
\RPr ~\subseteq~  \{(P\Aactiv Q)\mid P,Q~\subseteq~ At\}&~~&
\RPrc ~\subseteq~  \{(P\Aprod Q)\mid P,Q~\subseteq~ At\}
\end{array}\]
\item $\RAct$ and $\RInh$ are sets of {\em regulations}, such that  for some  $n\in\Nat$:
\[\begin{array}{lll}
\RAct  ~=~  \displaystyle{\bigcup_{i=0}^n \RAct_i}&~~~~~&
\RInh  ~=~  \displaystyle{\bigcup_{i=0}^n \RInh_i}
\end{array}\]
where $\RAct_i$ and $\RInh_i$ are inductively defined  as follows: 
\[\begin{array}{lll}
\RAct_0~\subseteq~ \{(P\Aactiv X )\mid  X  \in\,
\RPr\cup \RPrc\} &~~&
\RInh_0~\subseteq~ \{(P\Ainhib X )\mid  X  \in\,
\RPr\cup \RPrc\}\\
\RAct_{i+1}~\subseteq~ \{(P\Aactiv X )\mid  X  \in\,
\RAct_i\cup \RInh_i\} &&
\RInh_{i+1}~\subseteq~ \{(P\Ainhib X )\mid  X  \in\,
\RAct_i\cup \RInh_i\}
\end{array}\]
\end{itemize}
\noindent 
A {\em  link} is
either a production (i.e.~an element of  $\RPr\cup \RPrc$) or a
regulation (an element of $\RAct\cup \RInh$).
The \emph{depth} of a regulation $X$ is the integer $k$ such that
 $X\in\,\RAct_{k}\cup\RInh_{k}$. 
\end{definition}

{The ``stratified'' definition of regulations rules out the
  possibility
\tocheck
of circular chains of activations and inhibitions.}
Furthermore, it is worth pointing out that, 
 since $At$ is a finite set of atoms, then also the sets
 $\RPr$ and  $\RPrc$ are finite. 
\tocheck
Consequently,  $\RAct$ and
$\RInh$ are finite sets too, since the depth of their elements
is bounded by a given fixed $n\in\Nat$.

{Note that Definition \ref{def:mig} is a generalization
  w.r.t.~the presentation of productions given in Section
  \ref{sec:modelling}, in that it allows for multiple entities on the
  right-hand side of a production. This extension can be considered as
  an abbreviation: $a_1,\dots,a_n \multimap b_1,\dots b_k$ (where
  $\multimap$ is either $\Aactiv$ or $\Aprod$) stands for the set of
  productions $a_1,\dots,a_n \multimap b_1$, \dots,
  $a_1,\dots,a_n \multimap b_k$. 
\label{nota}
}

\begin{example}
\label{example:mig}
The MIG representing the MIM shown in Figure~\ref{fig:mimlac},
{ignoring  the initial conditions and the exogenous and
  endogenous atoms}, 
is constituted by 
\begin{displaymath}
\begin{array}{lll}
At & = &\{Lactose\comma  Galactosidase\comma  Glucose\comma  CAMP\comma lacZ\comma  Galactosidase\comma\\
	  && ~~Lactose\comma  Repressor\comma lacl\}\\
\RPr & = & \{(lacl\Aactiv Repressor),\ (lacZ\Aactiv Galactosidase)\}\\	  
\RPrc &  = & \{(Lactose 
                \Aprod Glucose)\}\\
\RAct & = & \{(CAMP \Aactiv (lacZ\Aactiv Galactosidase)),\\
         && ~~(Galactosidase\Aactiv((Lactose\Aprod Glucose))\}\\
\RInh & =&  \{(Repressor\Ainhib (lacZ\Aactiv Galactosidase)),\\
	  &&~~ (Lactose\Ainhib(Repressor\Ainhib (lacZ\Aactiv Galactosidase))),\\
	  &&~~ (Glucose\Ainhib (CAMP \Aactiv (lacZ\Aactiv
Galactosidase)))\}
\end{array}
\end{displaymath}
\end{example}

{
Having defined the structure of MIGs,
 we now need to provide some
machinery that allows one  to determine the set of substances that trigger
an activation (resp. inhibition) in a MIG. 
The next definition
introduces functions whose values are the regulations directly
activating/inhibiting  a link $X$ in a MIG.  
}

\begin{definition}[Direct regulations of a link]
For every link
$X\in\,\RPr\cup\RPrc\cup\RAct\cup \RInh$:
\[\begin{array}{lll}
\gamma_a(X) & = & \{Y\in\,\RAct\,\,\mid\,  Y  \mbox{ has
  has the form } (P\Aactiv X)\}\\
\gamma_i(X) & = & \{Y \in\,\RInh\,\,\mid\, Y  \mbox{ has 
   the form }(P\Ainhib X)\}
\end{array}\]
\end{definition}

Similarly to transition systems, 
a MIG 
constitutes a compact representation of a set of infinite
sequences of states, where every state {
is determined by the set of
  proteins, genes, enzymes, metabolites, etc that are present in the
  cell at a given time.} 
A sequence of such states thus represents the temporal evolution of
the cell, and will be called a {\em trace}. 
Differently from transition systems, however, the evolution of a MIG
is deterministic: each possible initial configuration determines a
single trace.  {The reason for this is that the representation
  abstract from quantities, hence entities are not considered as
  resources over which reactions may compete (see the remark at the
  end of Section \ref{sec:modelling}).}

Before formally defining the concept of trace, we 
introduce
some preliminary concepts such as the notion of \emph{active} and
\emph{inhibited} links. These two concepts, that are relative to a
given situation ({i.e.} a given set of atoms assumed to be true), 
 will provide the temporal
conditions under which a production can be triggered.

\begin{definition}[Active and inhibited links]
Let  ${\cal G}=\migtuple$ 
be a
MIG. Given $D\subseteq At$ and
$\multimap\in\{\Aactiv,\Aprod,\Ainhib\}$, a link $X=(P\multimap
Y)\in\,\RPr\cup\RPrc\cup\RAct\cup\RInh$ {(where $Y$ is either a set
  of atoms or a link)} is said to be {\em active
  in $D$} if the following conditions hold:

\begin{enumerate}[itemsep=0pt]
\item $P\subseteq D$;
\item every $Z\in\gamma_a(X)$ is active in $D$ -- i.e.
every regulation of the form $Q\Aactiv X\in\, \RAct$
is active in $D$;
\item for all $Z\in\gamma_i(X)$, $Z$  is not active in $D$ -- i.e.
there are no regulations of the form
$(Q\Ainhib X)\in\,\RInh$
that are
  active in $D$.
\end{enumerate}
A link $X\in \, \RPr\cup\RPrc\cup\RAct\cup\RInh$
is {\em
  inhibited in $D$} iff $X$ is not active in $D$.
\end{definition}

\hide{
\begin{comment}
give examples for the mim of lac operon? it would be useful to
understand the example after def. 6
\end{comment}
}

{
Before formalising the concepts of \emph{production} and
\emph{consumption} of substances inside a cell, 
it is worth pointing out that:
\begin{enumerate}[itemsep=0pt,label=\arabic*)]
\item a substance is produced in a cell as a result of a reaction,
  which is triggered whenever the {\em reactants} are present and the
  regulation conditions allow its execution.
\item A substance is consumed in a cell if it acts as a {\em reactive} in a
  reaction which has been triggered.
\item We do not consider quantitative information like concentrations or reaction times: if a
  substance is involved in several reactions at a time, its
  concentration
 does not
  matter,
all reactions will be triggered. Conversely,
  if a substance belongs to the {consumed} {\em reactants}
\tocheck of a 
 triggered reaction,
  it will be completely consumed.
\item I might be the case that
a substance is
  consumed in a reaction while produced by a different one, at the
  same time.
This possibility, that will be further commented below, will however
raise no inconsistency in the definition of traces.

\end{enumerate}
}

\begin{definition}[Produced and consumed atoms]\label{prod-cons-atoms}
Let $\G=\migtuple$
be a MIG and 
$D\subseteq At$. An atom $p\in At$ is {\em produced  in
  $D$} iff {$p\in Ed$ and}
there exists $(P\multimap Q)\in\, \RPr\cup\RPrc$, for 
$\multimap \in \lbrace \Aactiv, \Aprod \rbrace$, such that:
\begin{enumerate}[itemsep=0pt,label=(\roman*)] 
\item $p\in  Q$ and
\item  $(P\multimap Q)$ is active in $D$.
\end{enumerate}

\noindent An atom $p$ is {\em consumed in $D$} iff {$p\in Ed$ and}
there exists
$(P\Aprod Q)\in\, \RPrc$ such that 
\begin{enumerate}[itemsep=0pt,label=(\roman*)]
\item $p\in P$ and
\item $(P\Aprod Q)$ is active in $D$.
\end{enumerate}

\end{definition}

\begin{remark}\label{prod_and_cons}
It may happen that an atom $p$ is both produced
and consumed in a given $D\subseteq At$. Consider, for instance, a MIG with 
$At=Ed=\{p,q,r\}$, 
$\RPr=\{(p\Aactiv q)\}$, $\RPrc=\{(q\Aprod r)\}$,
$\RInh=\RAct=\emptyset$. If $D=\{p,q\}$, the atom
$q$ is produced by $(p\Aactiv q)$ and consumed by $(q\Aprod r)$
in $D$,
since $\{p\}\subseteq D$, $q\in\{q\}\subseteq D$ and 
there are no regulations governing these two  productions.
An even simpler example is given by the (unrealistic) MIG with $At
=Ed=\{p\}$, 
$\RPrc=\{(p\Aprod p)\}$, $\RPr=\RInh=\RAct=\emptyset$ and $D=\{p\}$.
\end{remark}

The behaviour of a MIG 
can be finally formally defined in terms of its
\emph{trace}, taking into account 
activations,
inhibitions, productions and consumptions.

\begin{definition}[Trace] \label{def:trace} 
A \emph{trace} $T$ on a set $At$ of atoms is an infinite
sequence of subsets of $At$, 
$T_0,T_1,\cdots$, called {\em states}. If $\G=\migtuple$
is a MIG,
a \emph{trace for   \G} is a trace $T$ such that:
\begin{enumerate}
\item {$p\in T_0$ for every $p\in{\cal B}$ and
$p\not \in T_0$ for every $\neg p\in{\cal B}$;}
\item 
for all $k\geq 0$ and every atom $p\in At$:
\begin{itemize}
\item {if $p\in Ex$, then $p \in T_{k+1}$ iff $p \in T_{k}$};
\item {if $p\in Ed$, then}
$p \in T_{k+1}$ if and only if either $p$ is  produced in $T_k$
or $p\in T_k$ and $p$  is not consumed in $T_k$.
\end{itemize}
\end{enumerate}
\end{definition}

It is worth pointing out 
that the condition on traces for a given MIG 
${\cal G}$ ensures 
that every change in a state of the trace {affecting endogenous
  atoms} has a justification in
${\cal G}$. Consequently, given the initial state $T_0$ of a trace for
$\G$, all the others are deterministically determined by the
productions and regulations of $\G$. 

As a final observation we remark that, 
when  an atom $p$ is both produced and consumed in a
given $T_k$, production prevails over consumption.  For instance, in a
trace for  the MIG of Remark \ref{prod_and_cons}  
with $T_0=\{p,q\}$, where $q$ is both produced and consumed,
$T_k=\{p,q,r\}$ for all $k\geq 1$.


\section{Representing MIGs in Linear  Temporal Logic}\label{sec:ltl}

This section considers the connection between traces and 
LTL and describes 
how to represent a MIG $\G$ 
 by means of an  LTL theory
   whose models 
are exactly the traces for $\G$.

LTL formulae with only unary future time operators are
 built from
the grammar  
\begin{equation*}
\label{ltl:lang}
\varphi ::= \bot \mid  p \mid \neg \varphi 
  \mid \varphi \vee \varphi
\mid \Next \varphi \mid \Box \varphi
\end{equation*}
where $p$ is an atom  (the other propositional connectives and the ``eventually''
operator can be defined as usual).

An LTL \emph{interpretation} $T$ 
is a trace, {i.e.} an infinite
sequence $T_0,T_1,\dots$ of \emph{states}, where a state is a set of atoms.
The  satisfaction relation $T_k\models\varphi$, where $T_k$ is a state
and $\varphi$ a formula built from a set of atoms $At$, is defined as follows:

\begin{enumerate}
	\itemsep 0 cm
	\item \label{tel:it1} $T_k \models p$ iff $p \in V_k$, for
          any $p \in At$; 
	\item  $T_k \not\models \perp$;
	\item $T_k \models \neg \varphi$ iff $T_k \not \models \varphi$;
	\item \label{tel:it3} $T_k \models \varphi \vee \psi$; 
	iff  $T_k \models \varphi$ or  $T_k \models \psi$;
	\item $T_k \models \Next \varphi$ iff $T_{k+1} \models \varphi$; 
	\item $T_k \models \Box \varphi$ iff for all $j \geq k$,
          $T_j\models \varphi$; 

\end{enumerate}
A formula $\varphi$ is true in an interpretation $T$ if and only if
$T_0\models \varphi$.

A MIG $\G=\migtuple$
is represented by
means of a set of LTL formulae on the set of atoms $At$.
First of all,  classical formulae representing the fact that a given 
link 
is active (or inhibited) are defined.
Below, $\multimap$ stands
for any of 
$\Aactiv$, $\Aprod$ or $\Ainhib$

\begin{definition}\label{def:AI-ltl}
Let $\G=\migtuple$
be a MIG.
If 
$X=
(P\multimap Y)\in\,\RPr\cup\RPrc\cup\RAct\cup\RInh$ {(where $Y$ is
  either a set of atoms or a link)},
then:
\[
\begin{array}{rll}
\acti(P\multimap Y
) & \eqdef  & \displaystyle{\bigwedge_{p\in P} p\wedge 
\bigwedge_{\rho \in\gamma_a(X)} \acti(\rho ) \wedge
\bigwedge_{\rho \in\gamma_i(X)} \inhib(\rho )}
\end{array}\]
{where $\inhib(X)$ is an abbreviation for}
the negation normal form of 
$\neg \acti(X)$.

\hide{
\\
\inhib(X) & =_{def} & \neg \acti(X)
\end{array}\]
\\
\inhib(X) &  =_{def} & \displaystyle{\bigvee_{p\in P} \neg p\vee
\bigvee_{\rho \in\gamma_a(X)} \inhib(\rho ) \vee
\bigvee_{\rho \in\gamma_i(X)} \acti(\rho )}
}

\end{definition}

{It is worth pointing out that both $\acti(X)$ and $\inhib(X)$ are
  classical propositional formulae.}

\begin{example}
Let us consider, for instance, the links of the MIG $\G$ of Example
\ref{example:mig}: 
\[\begin{array}{ll}
1) &(lacl\Aactiv Repressor)\\ 
2) & (lacZ\Aactiv Galactosidase)\\ 
3) & (Lactose
\Aprod Glucose)\\ 
4) & (CAMP \Aactiv (lacZ\Aactiv Galactosidase))\\
5) & (Galactosidase\Aactiv(Lactose\Aprod Glucose)\\
6) & (Repressor\Ainhib (lacZ\Aactiv Galactosidase))\\
7) & (Lactose\Ainhib(Repressor\Ainhib (lacZ\Aactiv Galactosidase)))\\
8) & (Glucose\Ainhib (CAMP \Aactiv (lacZ\Aactiv Galactosidase)))
\end{array}\]
For each of them, $\acti(X)$ and $\inhib(X)$ can be computed as
follows:
\begin{enumerate}[itemsep=0pt]
\item[$\acti(1)$]= $\acti(lacl\Aactiv Repressor)=lacl$;
\item[$\acti(7)$] = $\acti(Lactose\Ainhib(Repressor\Ainhib (lacZ\Aactiv
  Galactosidase)))=Lactose$; 
\item[$\acti(8)$] = $\acti(Glucose\Ainhib (CAMP \Aactiv (lacZ\Aactiv
  Galactosidase)))= Glucose$; 
\item[$\acti(4)$] = $\acti(CAMP \Aactiv (lacZ\Aactiv Galactosidase))= CAMP
  \wedge \inhib(8) =\\ CAMP \wedge \neg Glucose$; 
\item[$\acti(5)$] = $\acti(Galactosidase\Aactiv(Lactose\Aprod Glucose))=
Galactosidase$;
\item[$\acti(3)$] = $\acti(Lactose\Aprod Glucose)=Lactose\wedge \acti(5) =
Lactose\wedge Galactosidase$;
\item[$\acti(6)$] = $\acti(Repressor\Ainhib (lacZ\Aactiv Galactosidase))=
  Repressor \wedge \inhib(7) =\\ Repressor \wedge \neg Lactose$; 
\item[$\acti(2)$] = $\acti(lacZ\Aactiv Galactosidase)= 
lacZ \wedge \acti(4)\wedge
  \inhib(6) =\\lacZ \wedge CAMP \wedge \neg Glucose \wedge (\neg Repressor
  \vee Lactose)$; 
\end{enumerate}
\end{example}

The next result establishes that $\acti(X)$ is an adequate representation of
the property of being active for the link $X$. 
 
\begin{lemma}\label{lemma1}
Let $\G=\migtuple$
be a MIG and
$D\subseteq At$. For every link $X\in\,
\RPr\cup\RPrc\cup\RAct\cup\RInh$: $D\models \acti(X)$ if and only if $X$
is active in $D$.
\end{lemma}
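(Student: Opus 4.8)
The plan is to prove the biconditional by a well-founded induction that follows the recursion in Definition~\ref{def:AI-ltl}. The natural induction measure is the \emph{regulatory height} $h(X)$ of a link, defined by $h(X)=0$ when $\gamma_a(X)\cup\gamma_i(X)=\emptyset$, and $h(X)=1+\max\{h(\rho)\mid \rho\in\gamma_a(X)\cup\gamma_i(X)\}$ otherwise. This is well defined and finite: by the stratified definition of regulations, every $\rho\in\gamma_a(X)\cup\gamma_i(X)$ has depth exactly one greater than $X$ when $X$ is a regulation, and depth $0$ when $X$ is a production; since depths are bounded by the fixed $n\in\Nat$, the regulator relation is well founded and $h(\rho)<h(X)$ for every regulator $\rho$ of $X$.

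Before the induction I would record one observation that does most of the work: both $\acti(X)$ and $\inhib(X)$ are classical propositional formulae (as already noted right after Definition~\ref{def:AI-ltl}), so for a fixed state $D$ their satisfaction is ordinary propositional satisfaction and depends only on which atoms lie in $D$. In particular $D\models p$ iff $p\in D$, and, since $\inhib(\rho)$ is the negation normal form of $\neg\acti(\rho)$ and NNF preserves logical equivalence, $D\models\inhib(\rho)$ iff $D\not\models\acti(\rho)$. This is exactly what is needed to turn the third conjunct of $\acti(X)$ into the ``not active'' requirement in the definition of active links.

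The base case $h(X)=0$ is immediate: here $\acti(X)=\bigwedge_{p\in P}p$, so $D\models\acti(X)$ iff $P\subseteq D$, which (with conditions $(2)$ and $(3)$ of the definition of active links vacuously true) says precisely that $X$ is active in $D$. For the inductive step $h(X)>0$, every $\rho\in\gamma_a(X)\cup\gamma_i(X)$ satisfies $h(\rho)<h(X)$, so the induction hypothesis gives $D\models\acti(\rho)$ iff $\rho$ is active in $D$. I would then evaluate the three conjuncts of $\acti(P\multimap Y)$ at $D$ separately: $\bigwedge_{p\in P}p$ holds iff $P\subseteq D$ (condition $(1)$); $\bigwedge_{\rho\in\gamma_a(X)}\acti(\rho)$ holds iff every $\rho\in\gamma_a(X)$ is active in $D$ (condition $(2)$, by the hypothesis); and $\bigwedge_{\rho\in\gamma_i(X)}\inhib(\rho)$ holds iff no $\rho\in\gamma_i(X)$ is active in $D$ (condition $(3)$, by the observation together with the hypothesis). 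Conjoining, $D\models\acti(X)$ iff all three conditions hold, i.e.\ iff $X$ is active in $D$.

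The only real obstacle is making the induction well founded, which is mildly counterintuitive because the recursion in $\acti$ climbs \emph{upward} in the stratification: the formula of a link refers to the formulae of the links that regulate it, and these have strictly greater depth. Once one notices that depth is capped by $n$, so that the links of maximal depth have no regulators and form the genuine base of the recursion, the height $h$ packages this into a clean decreasing measure, and the remainder is a routine matching of the three conjuncts against conditions $(1)$--$(3)$. Everything else, in particular the handling of $\inhib$ through NNF, is purely propositional and needs no temporal reasoning, since activeness is evaluated in the single fixed state $D$.
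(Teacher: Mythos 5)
Your proof is correct and takes essentially the same route as the paper's: the paper inducts on $k = M-size(X)$, where $size(X)$ counts the arrows occurring in $X$ and $M$ is the maximal size of a link in $\G$ --- a measure that, exactly like your regulatory height $h$, strictly decreases when passing from a link to its regulators --- and its base case (no regulators, so $\acti(X)=\bigwedge_{p\in P}p$) and inductive step (matching the three conjuncts of $\acti(X)$ against conditions (1)--(3) of activeness) coincide with yours. Your explicit justification of well-foundedness via the depth bound $n$ and your handling of $\inhib$ through equivalence-preservation of negation normal form are points the paper's proof leaves implicit, but they do not change the structure of the argument.
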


\begin{proof}
Let the {\em size} of a link $X$, $size(X)$, 
 be defined as the number of arrows
$\multimap\,\in\{\Aactiv,\Aprod,\Ainhib\}$ occurring in $X$, and let $M$
be the maximal size of a link in $\G$. If $X=P\multimap Y$ is any
link in $\G$, the proof is by induction on
$k = M-size(X)$.
\begin{itemize}
\item If $k=0$, then $\G$ does not have any link of size
  greater than $size(X)$, hence $\gamma_a(X)=\gamma_i(X)=\emptyset$, 
  $\acti(P\multimap Y) = \bigwedge\limits_{p\in P} P$, and $X$ is
  active in $D$ 
  iff $P\subseteq D$. Clearly, $D\models  \bigwedge\limits_{p\in P} P$ iff
  $P\subseteq D$. 

\item If $k>0$, then, for every 
$Z\in  \gamma_a(X)\cup\gamma_i(X)$, $size(Z)=size(X)+1$,
 hence $M-(k+1)<k$. By the induction
  hypothesis, $D\models \acti(Z)$ iff $Z$ is active in $D$. 
Then the thesis follows from the facts that:
(i) $D\models \bigwedge\limits_{p\in P} p$ iff $ P\subseteq D$; 
(ii) for all 
$Z \in\gamma_a(X)$, $D\models  \acti(Z )$ iff $Z$ is active in
$D$ (by the induction hypothesis), and
(iii) for all $Z \in\gamma_i(X)$, $D\models  \neg \acti(Z )$ iff
$Z$ is
  not active in $D$ (by the induction hypothesis).
\end{itemize}
\end{proof}

\hide{
\begin{proof}
We first prove that the thesis holds when $X=(P\multimap\ell)
\in\, \RAct\cup\RInh$,
by induction on  $n-k$, 
where $n$ is the maximal depth of regulations in $\G$ and
$k$ is the depth of $X$. 
\begin{itemize}
\item If $n-k=0$, then $\gamma_a(X)=\gamma_i(X)=\emptyset$, 
  $\acti(P\multimap\ell) = \bigwedge\limits_{p\in P} P$, and $X$ is active in $D$
  iff $P\subseteq D$. Clearly, $D\models  \bigwedge\limits_{p\in P} P$ iff
  $P\subseteq D$. 
\item Let $j=n-k>0$. For every $\rho\in \gamma_a(X)\cup\gamma_i(X)$, the
  depth of $\rho$ is $k+1$, hence $n-(k+1)<j$. By the induction
  hypothesis, $D\models \acti(\rho)$ iff $\rho$ is active in $D$. 
Then the thesis follows from the facts that:
(i) $D\models \bigwedge\limits_{p\in P} p$ iff $ P\subseteq D$; 
(ii) for all 
$\rho \in\gamma_a(X)$, $D\models  \acti(\rho )$ iff $\rho$ is active in
$D$ (by the induction hypothesis), and
(iii) for all $\rho \in\gamma_i(X)$, $D\models  \neg \acti(\rho )$ iff
$\rho$ is
  not active in $D$ (by the induction hypothesis).
\end{itemize}
When $X\in\,\RPr\cup\RPrc$, the thesis is proved similarly to
the induction step above, 
just exploiting the fact that the elements in 
$\gamma_a(X)$ and $\gamma_i(X)$ are regulations, for which the
required equivalence holds.
\end{proof}
}

{In order to give a more compact presentation of the LTL theory
  representing a MIG, we
  define, for each atom $p\in At$,  classical formulae representing
  the fact that $p$ is produced or consumed.}

{
\begin{definition}
\label{defi:Pr-Cn}
Let Let $\G=\migtuple$
be a MIG,
 ${\cal P}rod=\RPr\cup\RPrc$, and $p\in At$. 
Then:
\[
\begin{array}{lll}
\Prod{p} &\eqdef&\left\{ 
        \begin{array}{ll}
        \bot & \mbox{ if }p\in Ex\medskip\\
        \bigvee_{(P\multimap Q)\in{\cal P}rod,\,p\in Q}\acti(P\multimap Q)  
             & \mbox{ if }p\in Ed
        \end{array}\right.\\
\\
\Con{p}&\eqdef&\left\{ 
     \begin{array}{ll}
        \bot & \mbox{~~if }p\in Ex\medskip\\
 \bigvee_{(P\rightarrowTriangle Q)\in\RPrc,\,p\in P}\acti(P\rightarrowTriangle Q)
     & \mbox{~~if }p\in Ed
        \end{array}\right.
\end{array}\]
\end{definition}
} 

{
\begin{example} \label{example:encoding}

Let us consider the simple 
MIG $\G$ of Example 	\ref{example:mig}, 
where atoms are partitioned into
			$Ex=\{lacl,lacZ,CAMP\}$ and 
$Ed=\{Repressor\comma 
Lactose\comma Galactosidase\comma Glu$ $cose\}$.\footnote{{In this
example we assume that lactose is endogenous, because it is the only
					consumed entity in the simple
                                        MIM of figure
                                        \ref{fig:mimlac}.}}
The abbreviations $\Prod{p}$ and $\Con{p}$ for the endogenous atoms
		are the following:
		\[\begin{array}{lllll}
		\Prod{Repressor} & \eqdef & \acti(lacl\Aactiv Repressor) \\&
		\eqdef& lacl\\
		\Prod{Lactose} & \eqdef & \bot\\
		\Prod{Galactosidase} & \eqdef & \acti(lacZ\Aactiv Galactosidase)\\ &
		\eqdef& 
		\multicolumn{3}{l}{lacZ \wedge CAMP \wedge\neg Glucose}\\
                &&\multicolumn{3}{l}{~~~ \wedge (\neg Repressor
			\vee Lactose)}\\
		\Prod{Glucose} & \eqdef & \acti(Lactose\Aprod Glucose) \\
		&\eqdef& Lactose\wedge Galactosidase\\
		\Con{Lactose} &\eqdef & \acti((Lactose\Aprod Glucose) \\&
		\eqdef& Lactose\wedge Galactosidase\\
		\Con{p} &\eqdef & 
		\multicolumn{3}{l}{\bot \mbox{ for }p\in \{Repressor,Galactosidase,Glucose\}}
		\end{array}
		\]
	\end{example}} 

Finally, the set of LTL formulae ruling the overall behaviour of  a
MIG  can be defined.

{
\begin{definition}\label{def:encoding}
If $\G=\migtuple$ 
is a MIG,
the \emph{LTL encoding of  $\G$} is the
  set of  formulae containing {all the literals in ${\cal B}$} and,
  for every $p\in At$, the formula 
\[\Box(\Next p \leftrightarrow \Prod{p}\vee (p
 \wedge \neg \Con{p}))\]

\end{definition}
}
\medskip

{It is worth pointing out that, if $p\in Ex$, then the 
  formula encoding its behaviour is equivalent to $\Box(\Next
  p\leftrightarrow
  p)$.
For endogenous atoms, the encoding 
captures the (negative and positive) effects produced by a
reaction on the environment at any time. This encoding has some
similarities with the \emph{successor state axioms} of the \emph{Situation
  Calculus}~
\cite{Reiter2001}.
}

\begin{example}
If $\G$ is the MIG of Example~\ref{example:mig}, 
		the LTL encoding of $\G$  contains (formulae equivalent to)
		$ \Box(\Next lacl\leftrightarrow lacl)$, and 
		similar ones for $lacZ$ and $CAMP$.

		Furthermore, it contains the following formulae,
		ruling the behaviour
		of endogenous atoms:
		\[\begin{array}{ll}
		\multicolumn{2}{l}
		{\Box(\Next Repressor \leftrightarrow \Prod{Repressor}\vee (Repressor
			\wedge \neg \Con{Repressor}))}\\
		~~~~~~~~~&\equiv \Box(\Next Repressor \leftrightarrow lacl \vee
		Repressor)\\
		\multicolumn{2}{l}
		{\Box(\Next Lactose \leftrightarrow \Prod{Lactose}\vee (Lactose
			\wedge \neg \Con{Lactose}))}\\
		~~~~~~~~~&\equiv \Box(\Next Lactose \leftrightarrow 
		Lactose \wedge \neg (Lactose\wedge Galactosidase))\\
		\multicolumn{2}{l}
		{\Box(\Next Galactosidase \leftrightarrow
                  \Prod{Galactosidase}\vee} \\
		&\multicolumn{1}{c}{ (Galactosidase	\wedge \neg
                  \Con{Galactosidase}))}\\ 
		~~~~~~~~~&\equiv \Box(\Next Galactosidase \leftrightarrow 
		(lacZ \wedge CAMP \wedge \neg Glucose \\
		&~~~~~~~~~~~~~~~\wedge (\neg Repressor
		\vee Lactose) \vee Galactosidase))\\
		\multicolumn{2}{l}
		{\Box(\Next Glucose \leftrightarrow \Prod{Glucose}\vee (Glucose
			\wedge \neg \Con{Glucose}))}\\
		~~~~~~~~~&\equiv \Box(\Next Glucose \leftrightarrow 
		(Lactose\wedge Galactosidase)\vee Glucose)
		\end{array}
		\]

\end{example}



The rest of this section is devoted to show that 
the LTL encoding of a MIG  correctly and
completely represents its behaviour. First of all, we prove that 
the {truth} of $\Prod{p}$ and $\Con{p}$ in a state coincide with the
atom $p$ being produced/consumed at that state.

\begin{lemma}\label{lemma2}
If $T$ is a model of the LTL encoding of a MIG,
then for every $k$ and every atom
$p\in {At}$, $p$ is produced in $T_k$ iff {$T_k\models \Prod{p}$} and 
$p$ is consumed in $T_k$ iff  {$T_k\models\Con{p}$}.
\end{lemma}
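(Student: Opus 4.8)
The plan is to reduce everything to Lemma~\ref{lemma1}, using the observation that, by Definition~\ref{defi:Pr-Cn}, both $\Prod{p}$ and $\Con{p}$ are finite disjunctions of formulae of the shape $\acti(P\multimap Q)$, each of which is a classical propositional formula. Since such formulae contain no temporal operators, their truth at a state $T_k$ under the LTL satisfaction relation coincides with classical propositional satisfaction when $T_k$ is read as the propositional interpretation assigning true to exactly the atoms it contains. This is the only bridge needed, and it lets me invoke Lemma~\ref{lemma1} with $D = T_k$.

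First I would fix an arbitrary index $k$ and atom $p\in At$ and split on the type of $p$. If $p\in Ex$, then $\Prod{p}=\Con{p}=\bot$ by Definition~\ref{defi:Pr-Cn}, so $T_k\not\models\Prod{p}$ and $T_k\not\models\Con{p}$; on the other hand, Definition~\ref{prod-cons-atoms} requires $p\in Ed$ for $p$ to be produced or consumed, so neither notion holds of an exogenous atom. Both equivalences therefore hold vacuously in this case.

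For $p\in Ed$ I would unfold $\Prod{p}=\bigvee_{(P\multimap Q)\in{\cal P}rod,\,p\in Q}\acti(P\multimap Q)$. By the remark above, $T_k\models\Prod{p}$ iff $T_k\models\acti(P\multimap Q)$ for some production $(P\multimap Q)\in\RPr\cup\RPrc$ with $p\in Q$, which by Lemma~\ref{lemma1} (taking $D=T_k$) holds iff some such $(P\multimap Q)$ is active in $T_k$. This is precisely the defining condition for $p$ being produced in $T_k$. The argument for consumption is identical, using $\Con{p}=\bigvee_{(P\Aprod Q)\in\RPrc,\,p\in P}\acti(P\Aprod Q)$ together with the clause of Definition~\ref{prod-cons-atoms} for consumed atoms.

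I do not expect a genuine obstacle: the single point requiring care is the passage from the LTL satisfaction relation to the classical one used in Lemma~\ref{lemma1}, and this is immediate because every $\acti(X)$ is purely propositional. I would also note in passing that the hypothesis ``$T$ is a model of the LTL encoding'' plays no role here, since both the truth of $\Prod{p}$, $\Con{p}$ at $T_k$ and the notions of produced/consumed in $T_k$ depend only on the state $T_k$; the stated equivalences hold at every state of every trace.
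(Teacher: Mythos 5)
Your proof is correct and takes essentially the same route as the paper's: both reduce the claim to Lemma~\ref{lemma1} (instantiated with $D=T_k$) via Definition~\ref{defi:Pr-Cn} and Definition~\ref{prod-cons-atoms}, the only cosmetic difference being that you case-split on $p\in Ex$ versus $p\in Ed$ upfront, whereas the paper recovers $p\in Ed$ from $\Prod{p}\neq\bot$ (resp.\ $\Con{p}\neq\bot$). Your closing observation is also accurate: the paper's own proof never uses the hypothesis that $T$ is a model of the encoding, so the equivalences indeed hold at every state of every trace.
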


\begin{proof}
Let $T$ be a model of $\G=\migtuple$
${\cal P}rod=\RPr\cup\RPrc$, 
$k\in\Nat$ and $p\in At$.
\begin{enumerate}
\item If  $T_k\models \Prod{p}$ then 
$\Prod{p}\neq\bot$
and 
there exists some $(P\multimap Q)\in {\cal P}rod$ such that $p\in Q$ and
$T_k\models \acti(P\multimap Q)$. By Lemma
  \ref{lemma1}, $(P\multimap Q)$ is active in $T_k$.
Moreover, since $\Prod{p}\neq\bot$,  $p\in Ed$.
Therefore, from Definition \ref{prod-cons-atoms} it follows that $p$
is produced 
in $T_k$.

\item If  $p$ is produced
in $T_k$, then $p\in Ed$ and there exists some $(P\multimap
Q)\in\,{\cal P}rod$, 
such that $p\in  Q$ and $(P\multimap Q)$ is active in $T_k$.
By Lemma \ref{lemma1}, $T_k\models \acti(P\multimap Q)$, 
hence $T_k\models \Prod{p}$ by Definition \ref{defi:Pr-Cn}, since $p\in Ed$.
\item If  $T_k\models \Con{p}$ then 
$\Con{p}\neq \bot$ and there exists some
$(P\Aprod Q)\in {\cal C}$ such that $p\in P$ and
$T_k\models \acti(P\Aprod Q)$,
By Lemma
  \ref{lemma1}, $P\Aprod Q$
is active in $T_k$.
Moreover, since $\Con{p}\neq\bot$,  $p\in Ed$.
Therefore,
from Definition \ref{prod-cons-atoms} it follows that $p$ is consumed
in $T_k$.

\item If  $p$ is consumed
in $T_k$, then $p\in Ed$ and there exists some 
$(P\Aprod Q)\in {\cal C}$
such that $p\in  P$ and 
$(P\Aprod Q)$ is active in $T_k$.
By Lemma \ref{lemma1}, 
$T_k\models \acti(P\Aprod Q)$, therefore 
$T_k\models \Con{p}$ by Definition \ref{defi:Pr-Cn}, since $p\in Ed$.
\end{enumerate}
\end{proof}

\hide{
If $\G$ is a MIG with sets of atoms $At$, the language of its LTL
axiomatization is an extension of $At$. Consequently, in order to
relate traces for $\G$ and models of its encoding, 
the
notion of trace for $\G$ must be extended to traces on
the extended language $At'=At\cup \{\Prod{p},\Con{p}\mid p\in 
\red{Ed}\}$.


\begin{definition}[Extended trace for a MIG]\label{def:extended-trace}
\label{extended-trace}
Let $\G$ be a MIG and $T=T_0,T_1,\dots$ 
a trace for $\G$. The extended trace 
obtained from $T$ is the trace $T'=T'_0,T'_1,\dots$ such that for
every $k\in\Nat$: 
\begin{enumerate}[itemsep=0pt,label=\arabic*)]
\item\label{extrace:cond1} 
$T'_k\subseteq At'=At\cup \{\Prod{p},\Con{p}\mid p\in \red{Ed}\}$;
\item\label{extrace:cond2} 
for all $p\in At$, $p\in T'_k$ iff $p\in T_k$;
\item\label{extrace:cond3} 
for all $p\in \red{Ed}$, $\Prod{p}\in T'_k$ iff $p$ is produced in $T_k$;
\item\label{extrace:cond4} 
for all $p\in \red{Ed}$, $\Con{p}\in T'_k$ iff $p$ is consumed in $T_k$.
\end{enumerate}

An extended trace for $\G$ is an extended trace obtained from a trace for $\G$.
\end{definition}

It is immediate to see that every extended trace for $\G$ is a trace
for $\G$. The following theorem demonstrates that our LTL encoding of
a MIG correctly and completely represents its behaviour. 
}

\noindent
The adequacy of the LTL encoding of a MIG  can finally
be proved.

\begin{theorem}[Main result]\label{th:main}
If $\G$ is a MIG, 
then:
\begin{enumerate}
\item every  trace for  $\G$  is a model of the LTL encoding of $\G$;
\item every model of the LTL encoding of $\G$ is a trace for $\G$.
\end{enumerate}
\end{theorem}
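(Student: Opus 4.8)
The plan is to prove the two inclusions by unfolding the definition of \emph{trace for $\G$} (Definition~\ref{def:trace}) against the LTL encoding (Definition~\ref{def:encoding}), reducing the entire argument to a single statewise equivalence supplied by Lemmas~\ref{lemma1} and~\ref{lemma2}. The crucial observation is that $\Prod{p}$, $\Con{p}$ and each $\acti(X)$ are purely propositional formulae evaluated on the current state, so the content of Lemma~\ref{lemma2} --- for every $D\subseteq At$ and every atom $p$, that $p$ is produced in $D$ iff $D\models\Prod{p}$ and $p$ is consumed in $D$ iff $D\models\Con{p}$ --- holds of each individual state irrespective of the rest of the sequence. First I would make this statewise reading explicit, observing that the proof of Lemma~\ref{lemma2} never appeals to the encoding but only to Lemma~\ref{lemma1} and Definitions~\ref{prod-cons-atoms} and~\ref{defi:Pr-Cn}, so that it may be used freely in both directions.

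For part~1, let $T$ be a trace for $\G$. I would verify the two families of formulae separately. The literals of ${\cal B}$ hold at $T_0$ directly by clause~1 of Definition~\ref{def:trace}, and since a formula is true in $T$ exactly when $T_0$ satisfies it, each such literal is true in $T$. For the boxed biconditional associated with an atom $p$, it suffices to show, for every $k$, that $p\in T_{k+1}$ iff $T_k\models\Prod{p}\vee(p\wedge\neg\Con{p})$. Here I split on the type of $p$: if $p\in Ex$ then $\Prod{p}=\Con{p}=\bot$, so the right-hand side simplifies to $p$ and the required $p\in T_{k+1}\iff p\in T_k$ is exactly the exogenous case of clause~2; if $p\in Ed$, the endogenous case reads $p\in T_{k+1}$ iff $p$ is produced in $T_k$ or ($p\in T_k$ and $p$ is not consumed in $T_k$), which the statewise equivalence converts verbatim into $T_k\models\Prod{p}\vee(p\wedge\neg\Con{p})$. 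As this holds at every $k$, the $\Box$-formula is true in $T$.

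Part~2 is the mirror image. Let $T$ be a model of the encoding. Clause~1 of Definition~\ref{def:trace} follows because the encoding contains every literal of ${\cal B}$ and these are evaluated at $T_0$. For clause~2, the truth of each boxed biconditional at every state gives $p\in T_{k+1}$ iff $T_k\models\Prod{p}\vee(p\wedge\neg\Con{p})$ for all $k$; the same case split on $Ex$ versus $Ed$, together with the statewise equivalence --- now legitimately invoked as Lemma~\ref{lemma2}, since $T$ is a model --- recovers precisely the exogenous persistence condition and the endogenous production/consumption condition. Hence $T$ is a trace for $\G$.

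The argument is essentially a matched unfolding of two definitions, so the delicate points are bookkeeping rather than mathematical depth. The main thing to get right is the exogenous case, where one must use that $\Prod{p}=\Con{p}=\bot$ collapses the encoding to the persistence condition $\Next p\leftrightarrow p$. It is also worth noting that placing the disjunct $\Prod{p}$ before $(p\wedge\neg\Con{p})$ makes production override consumption, so the formula faithfully mirrors the ``production prevails'' behaviour remarked upon after Definition~\ref{def:trace} when an atom is simultaneously produced and consumed. The secondary subtlety, which is why I would first recast Lemma~\ref{lemma2} as a statewise fact independent of the encoding, is that part~1 needs the production/consumption equivalence for a sequence $T$ not yet known to be a model.
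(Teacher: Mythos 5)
Your proof is correct and takes essentially the same route as the paper's: both parts proceed by unfolding Definition~\ref{def:trace} against Definition~\ref{def:encoding}, splitting on $p\in Ex$ versus $p\in Ed$, and using Lemma~\ref{lemma2} to convert ``produced/consumed in $T_k$'' into $T_k\models\Prod{p}$ and $T_k\models\Con{p}$. Your statewise recasting of Lemma~\ref{lemma2} is in fact a small gain in rigor over the paper, whose own proof of part~1 invokes Lemma~\ref{lemma2} for a trace $T$ not yet known to be a model of the encoding --- exactly the subtlety you identified and repaired.
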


\begin{proof}
Let us assume that $T$ is a trace for $\G=\migtuple$. 
{Clearly, for every literal 
$\ell\in{\cal B}$, $T_0\models \ell$, since $\ell$ belongs to the
  encoding of $\G$.}
Moreover, 
for all $k\geq 0$ and every atom $p\in At$:
\begin{itemize}
\item if $p\in Ex$, then $p\in T_{k+1}$ if and only if $p\in T_k$.
Hence, $T_k\models \Next p\leftrightarrow p$,   {i.e.} 
$T_k\models \Next p \leftrightarrow \Prod{p}\vee (p \wedge \neg \Con{p})$.
\item If $p\in Ed$, then 
$p \in T_{k+1}$ if and
only if either $p$ is  produced in $T_k$ or $p\in T_k$ and $p$  is not
consumed in $T_k$.  By Lemma \ref{lemma2}, this amounts to saying that
$p \in T_{k+1}$ if and
only if either $T_k\models \Prod{p}$ or $p\in T_k$ and $T_k\not\models
\Con{p}$.  Consequently,
$T_k\models \Next p \leftrightarrow \Prod{p}\vee (p \wedge \neg
\Con{p})$.
\end{itemize}
Since these properties  hold for all
$k$, it follows that for all $p\in At$, 
$T\models \Box(\Next p \leftrightarrow \Prod{p}\vee (p \wedge \neg \Con{p}))$.

For the other direction, let us assume that $T$ is a model of the LTL
encoding of $\G$. Then, in particular, $T_0\models {\cal B}$, hence 
$p\in T_0$ for
every $p\in{\cal B}$, and $p\not\in T_0$ for
every $\neg p\in{\cal B}$. 
Moreover, for all $k\geq 0$ and every atom $p\in At$:
\begin{itemize}
\item  if $p\in Ex$, then $T_k\models \Next p\leftrightarrow p$,
 hence  $p\in T_{k+1}$ if and
  only if $p\in T_k$. 
\item If $p\in Ed$, then 
$T_k\models \Next p \leftrightarrow
   \Prod{p}\vee (p \wedge \neg \Con{p})$, hence 
$p \in T_{k+1}$ if and
only if either $T_k\models \Prod{p}$ or $p\in T_k$ and $T_k\not\models
\Con{p}$. By lemma \ref{lemma2}, this amounts to saying that
$p \in T_{k+1}$ if and
only if either $p$ is  produced in $T_k$ or $p\in T_k$ and $p$  is not
consumed in $T_k$. 
\end{itemize}
Consequently, $T$ is a trace for $\G$.
\end{proof}

\section{Bounding Time and Reduction to SAT}
\label{sec:sat}

\tocheck
The use of an LTL formalization allows us to consider solutions with
infinite length when performing reasoning tasks such as abduction or
satisfiability checks. However, LTL tools for abduction are 
not as developed as in the case of propositional logic, since
\tocheck
the abductive task is in general very complex.\footnote{A method to
  perform abduction for a  fragment of LTL sufficient to represent
  problems on MIMs has been proposed
  in \cite{CP17}, but it has not been implemented.}
In order to take advantage of the highly efficient tools for propositional reasoning
such as SAT-solvers, abduction algorithms, etc, 
the solver that will be presented in Section \ref{sec:implementation}
reduces the problem to propositional logic by
assuming bounded time.
In essence, the  reduction   simulates the truth value of an LTL propositional
variable $p$ along time by a finite set of $n$ fresh atoms, one per time instant. Moreover, the
behaviour of the ``always'' temporal operator is approximated by use of finite conjunctions.
Exogenous variables are not grounded, since it is useless and expensive to consider
different variables in this case.


{
In detail, 
 the grounding to a given time $k\in\Nat$ of a
propositional formula $\varphi$ built from a set of atoms partitioned into
exogenous and endogenous is first of all defined.

\begin{definition}[Grounding of propositional formulae]
	\label{def:prop-grounding}
	Let $\varphi$ be a propositional formula built from the set of atoms
	$At=Ex\,\dot\cup\,Ed$.  The grounding of $\varphi$
	to time $k$, $\inftr{\varphi}_k$, is defined as follows:
	\begin{itemize}
		\item if $p\in Ex$, then $\inftr{p}_k \eqdef p$;
		\item if $p\in Ed$, then 
		$\inftr{p}_k \eqdef p_k$, where  $p_k$ is a
		new propositional variable; 
		\item $\inftr{\neg \varphi}_k \eqdef \neg \inftr{\varphi}_{k}$;
		\item $\inftr{\varphi \vee \psi}_k \eqdef \inftr{\varphi}_k \vee
		\inftr{\psi}_k$.  
	\end{itemize}
	If $S$ is a set of proposional formulae, then 
	$\inftr{S}_k=\{\inftr{\varphi}_k\mid \varphi\in S\}$.
\end{definition}

Next, the grounding of the encoding of a MIG is defined.

\begin{definition}[Grounding of the encoding of a MIG]
	\label{def:grounding}
	Let $\G=\migtuple$ be a MIG, $S$ its LTL encoding and $k\in\Nat$.
	
	For all  $p\in Ed$, if  $SSA_p$ is the formula $\Box(\Next p
	\leftrightarrow \Prod{p}\vee (p  \wedge \neg \Con{p}))$ belonging to
	$S$, we define
	\[\inftr{SSA_p}_k=p_{k+1} \leftrightarrow \inftr{\Prod{p}\vee (p
		\wedge \neg \Con{p})}_k\]
	
	The grounding $\inftr{S}_k$ of $S$ up to time $k$ is defined as follows:
	\[\inftr{S}_k=\{ \inftr{\ell}_0\mid \ell\in {\cal B}\}\cup
	\{\inftr{SSA_p}_i\mid p\in Ed \mbox{ and } 0\leq i<k\}\]
\end{definition}

The grounding  $\inftr{SSA_p}_k$ is well defined, since 
$\Prod{p}\vee (p\wedge \neg \Con{p})$ is a classical formula.
Note that ``successor state axioms'' $SSA_p$
in the LTL encoding of $\G$
are grounded only 
for endogenous variables and only as far as the ``$\Next p$'' refers
to a state that ``exists'' in the bounded timed model.

The next definition formalizes the notion of a temporal interpretation
$T$ and a classical one $M$ being models of the same initial state.
\hide{
In order to prove a kind of ``model correspondence'' property, 
the meaning of 
correspondence between temporal and propositional interpretations has
to be defined.
}

\begin{definition}
	Let $At=Ex\,\dot\cup\,Ed$ be a set of atoms, 
	$T=T_0,T_1,\dots$  an LTL interpretation of the language $At$ and
	$k\in\Nat$. 
	A
	classical interpretation $M$ is said to correspond to $T$ up to time
	limit $k$ if $M$ is an interpretation of the  language 
	$Ex\cup\{p_i\mid p\in Ed\mbox{ and }0\leq i\leq k\}$ and
	for all $p\in At$, $M\models \inftr{p}_0$ iff $T_0\models p$.
\end{definition}


The next result establishes a kind of 
``model correspondence'' property.

\begin{theorem}[Model correspondence]
	Let $\G=\migtuple$ be a MIG, $S$ its LTL encoding, and
	$\inftr{S}_n$  the grounding of $S$ up to time $n$.
	If $T=T_0,T_1,\dots$ is any model of $S$ and 
	$M$ a model of $\inftr{S}_n$ corresponding to $T$, then 
	for every classical propositional formula $\varphi$ and every
	$k=0,\dots,n$:  $M\models
	\inftr{\varphi}_k$ iff $T_k\models\varphi$.
\end{theorem}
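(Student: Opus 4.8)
The plan is to split the claim into an atomic correspondence and a structural lifting. First I would isolate the auxiliary statement that for every $k=0,\dots,n$ and every \emph{atom} $p\in At$ one has $M\models\inftr{p}_k$ iff $T_k\models p$. Granting this, the statement for an arbitrary classical formula $\varphi$ follows by a routine induction on the structure of $\varphi$: the grounding $\inftr{\cdot}_k$ commutes with $\neg$ and $\vee$ by Definition~\ref{def:prop-grounding}, and the satisfaction clauses for $\neg$ and $\vee$ coincide on the propositional side (for $M$) and the temporal side (for $T_k$). I would state this lifting once, as a lemma of the form ``atomic correspondence at time $k$ implies full correspondence at time $k$'', because it will be reused inside the main induction.

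The atomic statement is then proved by induction on $k$. The base case $k=0$ is immediate: it is exactly the defining clause of $M$ corresponding to $T$, namely $M\models\inftr{p}_0$ iff $T_0\models p$ for all $p\in At$. For the step, assume atomic correspondence at time $k$, with $k+1\leq n$, and take an atom $p$. If $p\in Ex$, then $\inftr{p}_{k+1}=p=\inftr{p}_k$ since exogenous atoms are never grounded, and because $\Prod{p}=\Con{p}=\bot$ the encoding formula for $p$ in $S$ collapses to $\Box(\Next p\leftrightarrow p)$; as $T$ models $S$ this gives $T_{k+1}\models p$ iff $T_k\models p$, and chaining with the induction hypothesis yields $M\models\inftr{p}_{k+1}$ iff $T_{k+1}\models p$. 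If $p\in Ed$, then $\inftr{p}_{k+1}=p_{k+1}$ and, since $0\leq k<n$, the grounded successor-state axiom $\inftr{SSA_p}_k=(p_{k+1}\leftrightarrow\inftr{\Prod{p}\vee(p\wedge\neg\Con{p})}_k)$ belongs to $\inftr{S}_n$, so $M\models p_{k+1}$ iff $M\models\inftr{\Prod{p}\vee(p\wedge\neg\Con{p})}_k$; on the temporal side, $T\models S$ gives $T_{k+1}\models p$ iff $T_k\models\Prod{p}\vee(p\wedge\neg\Con{p})$.

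The one delicate point, which I expect to be the main obstacle, is that closing the endogenous case requires the correspondence $M\models\inftr{\psi}_k$ iff $T_k\models\psi$ for the \emph{compound} formula $\psi=\Prod{p}\vee(p\wedge\neg\Con{p})$, not merely for atoms; this is exactly why the time-induction and the structural lifting cannot be fully decoupled. I would resolve it by invoking the lifting lemma at time $k$, whose hypothesis (atomic correspondence at $k$) is precisely the outer induction hypothesis, thereby transferring the equivalence to $\psi$. The chain $M\models p_{k+1}$ iff $T_k\models\psi$ iff $T_{k+1}\models p$ then closes the step. Everything remaining is bookkeeping: matching each grounded axiom in $\inftr{S}_n$ with its source in $S$, and noting that the bound $n$ is exactly what makes $\inftr{SSA_p}_k$ available for every $k<n$, which is why the correspondence is claimed only up to time $n$.
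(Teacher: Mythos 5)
Your proposal is correct and is essentially the paper's own proof: the paper argues by a double induction on $k$ and $\varphi$, where the outer induction hypothesis (full correspondence at time $k-1$) is used exactly as you use your lifting lemma to transfer the equivalence to the compound formula $\Prod{p}\vee(p\wedge\neg\Con{p})$ in the endogenous case, and the exogenous and endogenous atom cases are handled with the same two observations (the collapse to $\Box(\Next p\leftrightarrow p)$, and the grounded axiom $p_k\leftrightarrow\inftr{\Prod{p}\vee(p\wedge\neg\Con{p})}_{k-1}$ in $\inftr{S}_n$). Your only departure is presentational: you factor the inner structural induction into a standalone ``atomic correspondence implies full correspondence'' lemma rather than interleaving it, which changes nothing in the substance of the argument.
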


\begin{proof}
	By double induction on $k$ and $\varphi$.
	\begin{enumerate} 
		\item If  $k=0$, the thesis is proved by induction on $\varphi$.
		\begin{enumerate}
			\item 
			If $\varphi$ is an atom, then the thesis follows
			immediately from  the fact that $M$ corresponds to $T$. 
			\item\label{case2} If $\varphi=\neg\varphi_0$ 
			or $\varphi=\varphi_0\vee \varphi_1$, the thesis follows
			from the induction hypothesis, the definition of $\inftr{\varphi}_k$
			(Definition \ref{def:prop-grounding}) 
			and the  definition of
			$\models$ for classical logic.
		\end{enumerate}
		
		\item $0<k\leq n$: By the induction hypothesis 
		$T_{k-1}\models\varphi$ iff $M\models \inftr{\varphi}_{k-1}$ for
		every propositional formula $\varphi$.
		The thesis is proved by induction on $\varphi$:
		\begin{enumerate}
			\item If $\varphi$ is an atom, we consider two cases:
			\begin{enumerate}
				\item $p\in Ex$: since $T_{k-1}\models \Next p \leftrightarrow p$, 
				then $T_{k}\models p$ iff $T_{k-1}\models p$. By the induction
				hypothesis, $T_{k-1}\models p$ iff $M\models \inftr{p}_{k-1}$. Since
				$ \inftr{p}_{k-1} = p = \inftr{p}_{k}$, $T_{k}\models p$ iff $M\models
				\inftr{p}_{k}$. 
				\item $p\in Ed$: since  $T_{k-1}\models \Next p \leftrightarrow
				\Prod{p}\vee (p \wedge \neg \Con{p})$, $T_k\models p$ iff 
				$T_{k-1}\models \Prod{p}\vee (p \wedge \neg \Con{p})$. By the
				induction hypothesis, the latter assertion holds iff
				$M\models \inftr{\Prod{p}\vee (p \wedge \neg \Con{p})}_{k-1}$.
				By Definition \ref{def:grounding}, $\inftr{S}_n$ contains
				$p_k\leftrightarrow \inftr{\Prod{p}\vee (p \wedge \neg
					\Con{p})}_{k-1}$, and, since $M\models \inftr{S}_n$, 
				$M\models \inftr{\Prod{p}\vee (p \wedge \neg \Con{p})}_{k-1}$ iff
				$M\models p_k$. Therefore, $T_k\models p$ iff $M\models p_k$.
			\end{enumerate}
			\item If $\varphi=\neg\varphi_0$ 
			or $\varphi=\varphi_0\vee \varphi_1$, the thesis follows
			from the induction hypothesis, 
			Definition \ref{def:prop-grounding}
			and the  definition of
			$\models$ for classical logic, like in the
                        base case.
		\end{enumerate}
	\end{enumerate}
\end{proof}

The rest of this section is devoted to establish the complexity of
grounding for the encoding of a MIG.
%
Let the size of a formula be measured in terms of the number of its logical
operators:
if $\varphi$ is a formula, $\size{\varphi}$
is the number of logical operators in $\varphi$. If $S$ is a set of
formulae, then $\displaystyle{\size{S}=\sum_{\varphi\in
    S}\size{\varphi}}$.

\begin{theorem}[Complexity of the encoding]
\label{encoding-complexity}
	Let $\G$ be a MIG, $S$ its LTL encoding and $\inftr{S}_n$ the grounding
	of $S$ up to time $n$. Then 
	$\size{\inftr{S}_n}\leq n\times
	\size{S}$.
\end{theorem}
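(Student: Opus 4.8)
The plan is to reduce everything to a single structural invariant of the grounding operation and then perform a size accounting formula by formula. The invariant I would establish first is that grounding a propositional formula never changes the number of logical operators it contains: for every propositional formula $\varphi$ and every $k\in\Nat$, $\size{\inftr{\varphi}_k}=\size{\varphi}$. This follows by induction on the structure of $\varphi$ using Definition~\ref{def:prop-grounding}. In the base case an atom $p$ is mapped either to $p$ (if $p\in Ex$) or to the fresh atom $p_k$ (if $p\in Ed$), in both cases a formula with zero operators; in the inductive case grounding commutes with every connective, since it only renames atoms and leaves the connective in place, so it neither creates nor destroys operators. The same reasoning covers the derived connectives $\wedge$ and $\leftrightarrow$ occurring in $\Prod{p}$, $\Con{p}$ and in the biconditional of the axioms, whether one views them as primitive (grounding them homomorphically) or as abbreviations over $\neg$ and $\vee$.

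With this invariant in hand, I would compute the size of each grounded formula. For a literal $\ell\in{\cal B}$ the invariant gives $\size{\inftr{\ell}_0}=\size{\ell}$. For the successor-state axioms, write $\psi_p$ for $\Prod{p}\vee(p\wedge\neg\Con{p})$, so that $SSA_p=\Box(\Next p\leftrightarrow\psi_p)$ and, by Definition~\ref{def:grounding}, $\inftr{SSA_p}_i = p_{i+1}\leftrightarrow\inftr{\psi_p}_i$. Counting operators and applying the invariant to $\psi_p$ yields $\size{SSA_p}=\size{\psi_p}+3$ (the $\Box$, the $\Next$ and the $\leftrightarrow$) whereas $\size{\inftr{SSA_p}_i}=1+\size{\inftr{\psi_p}_i}=1+\size{\psi_p}$. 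Two observations are crucial: first, $\size{\inftr{SSA_p}_i}$ does not depend on $i$; second, $\size{\inftr{SSA_p}_i}=\size{SSA_p}-2\le\size{SSA_p}$, because grounding discards the two temporal operators $\Box$ and $\Next$.

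It then remains to sum. By Definition~\ref{def:grounding}, $\size{\inftr{S}_n}=\sum_{\ell\in{\cal B}}\size{\inftr{\ell}_0}+\sum_{p\in Ed}\sum_{i=0}^{n-1}\size{\inftr{SSA_p}_i}$. Since $\size{\inftr{SSA_p}_i}$ is constant in $i$, the inner sum contributes exactly a factor $n$, giving $\size{\inftr{S}_n}=\sum_{\ell\in{\cal B}}\size{\ell}+n\sum_{p\in Ed}\size{\inftr{SSA_p}_0}$. Assuming $n\ge 1$, I would bound the first summand by $n\sum_{\ell\in{\cal B}}\size{\ell}$ and the second, using $\size{\inftr{SSA_p}_0}\le\size{SSA_p}$ together with $Ed\subseteq At$, by $n\sum_{p\in At}\size{SSA_p}$; adding the two bounds gives exactly $n(\sum_{\ell\in{\cal B}}\size{\ell}+\sum_{p\in At}\size{SSA_p})=n\,\size{S}$, as required.

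The argument is essentially routine once the operator-preservation invariant is isolated; the only point that needs genuine care is the interplay producing the slack in the inequality, namely that $S$ carries an $SSA_p$ for every $p\in At$ while $\inftr{S}_n$ only grounds those for $p\in Ed$, and that the untimed literals of ${\cal B}$ are counted once on the left but may be charged $n$ times on the right. This last point also exposes the degenerate case $n=0$, where the bound would force $\size{\inftr{S}_0}=0$ and hence fail whenever ${\cal B}$ contains a negative literal; I would therefore read the result as stated for $n\ge 1$, which is the only case of interest since grounding to a strictly positive horizon is what the solver uses.
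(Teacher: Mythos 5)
Your proof is correct and takes essentially the same route as the paper's: both rest on the invariant that grounding preserves the operator count of classical formulae, deduce $\size{\inftr{SSA_p}_i}=\size{SSA_p}-2$, and then sum over the literals of ${\cal B}$ and the grounded successor-state axioms. If anything, your accounting is more careful than the paper's, which miscounts the grounded instances as $n-1$ per endogenous atom (it is $n$, for $0\le i<n$), and which leaves implicit exactly the two sources of slack you make explicit (axioms for $p\in Ex$ counted in $S$ but never grounded, and ${\cal B}$ counted once on the left but charged $n$ times on the right) as well as the degenerate case $n=0$, where the stated bound can indeed fail when ${\cal B}$ contains a negative literal.
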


\begin{proof}
	First of all we note that if $\varphi$ is a classical formula, then
	$\size{\varphi}= \size{\inftr{\varphi}_k}$ for any $k$.
	Consequently, 
\[\size{p_k\leftrightarrow \inftr{\Prod{p}\vee (p \wedge \neg
			\Con{p})}_{k-1}}= 
	\size{\Next p \leftrightarrow
		\Prod{p}\vee (p \wedge \neg \Con{p})} -1\]
	and  $\size{\inftr{SSA_p}_k}=\size{SSA_p}-2$.
	
	Let $S$ be the LTL encoding of a MIG $\G=\migtuple$
	and $\inftr{S}_n$ its grounding up to time $n$.
	\begin{enumerate}
		\item For each $\inftr{\ell}_0\in \inftr{S}_ n $ such that
		$\ell\in{\cal B}$, $\size{\inftr{\ell}_0} =\size{\ell}$. Therefore
		$\size{\inftr{{\cal B}}_0}=\size{\cal B}$. 
\hide{
Clearly, moreover, the time
			required to generate $\inftr{{\cal B}}_0$ is linear in the size of 
			$\cal B$, since every $\inftr{\ell}_0$ can be generated in constant
			time.}
		
		\item Beyond the literals in $\inftr{{\cal B}}_0$, $\inftr{S}_ n $
		contains $\inftr{SSA_p}_k$ for all $p\in Ed$ and $0\leq k<
		 n $. 
		Hence, for every $SSA_p\in S$, $\inftr{S}_ n $ contains $ n -1$
		formulae, the size of each of them being smaller than the size of
		$SSA_p$. Therefore
		\[\size{\{\inftr{SSA_p}_k \mid p\in Ed\mbox{ and }0\leq k< n \}}
		< n \times \size{\{SSA_p\mid p\in Ed\}}\]
		
		\hide{Moreover, the time needed to generate a single $\inftr{SSA_p}_k$
			is constant w.r.t. the size of $SSA_p$, therefore the time needed to
			generate the set 
			$\{\inftr{SSA_p}_k \mid p\in Ed\mbox{ and }0\leq k< n \}$ is
			linear in the size of $\{SSA_p\mid p\in Ed\}$.
		}
	\end{enumerate}
	Therefore, $||\inftr{S}_ n ||\leq  n  \times ||S||$. 
\end{proof}


It is worth pointing out that exogenous variables are not grounded.
\hide{
 in other terms, 
 it is useless and expensive to consider different variables
$p_0,p_1,\dots$, since a single $p$ is enough.
Hence, when $p$ is exogenous, in the above defined translation,
 $p_i$ is to be replaced by $p$, for all $i$.
}
Consequently, for instance, if $Lactose$ is assumed to be exogenous,
the grounding up to  time $k$ of the 
LTL formula 
$ \Box(\Next Glucose \leftrightarrow 
(Lactose\wedge Galactosidase)\vee Glucose)$ 
is the conjunction of all the formulae of the form
$Glucose_{i+1} \leftrightarrow 
(Lactose\wedge Galactosidase_i)\vee Glucose_i$
for $0\leq i< k$.}

\hide{
$\Box ( Lactose\wedge Galactosidase\imp
\Prod{Glucose})$  is the conjunction of all the formulae of the form
$Lactose\wedge Galactosidase_i \imp \Prod{Glucose}_i$, 
for $0\leq i\leq k$.
}

\section{The P3M tool: a software platform for modelling and
  manipulating MIMs
   } 
\label{sec:implementation}

In this section we present P3M ({\bf P}latform for {\bf M}anipulating
{\bf M}olecular Interaction {\bf M}aps), a prototypal system implementing the
representation mechanism outlined in the previous sections and able to
solve the following problems, that will be discussed further on: graph
validation, graph querying and graph updating.  The system is written
in Objective Caml \cite{ocaml}, and interfaces with the C
implementation of the Picosat solver library~\cite{Picosat08}. A
graphical user interface has been developed to help biologists to
interact with the system in a user-friendly way.  The general
architecture of the system is represented in Figure \ref{fig:implant},
and will be further explained below.
\red{P3M can be downloaded at  \url{http://www.alliot.fr/P3M/}.}

\begin{figure}
\centering
\includegraphics[width=.4\textwidth]{./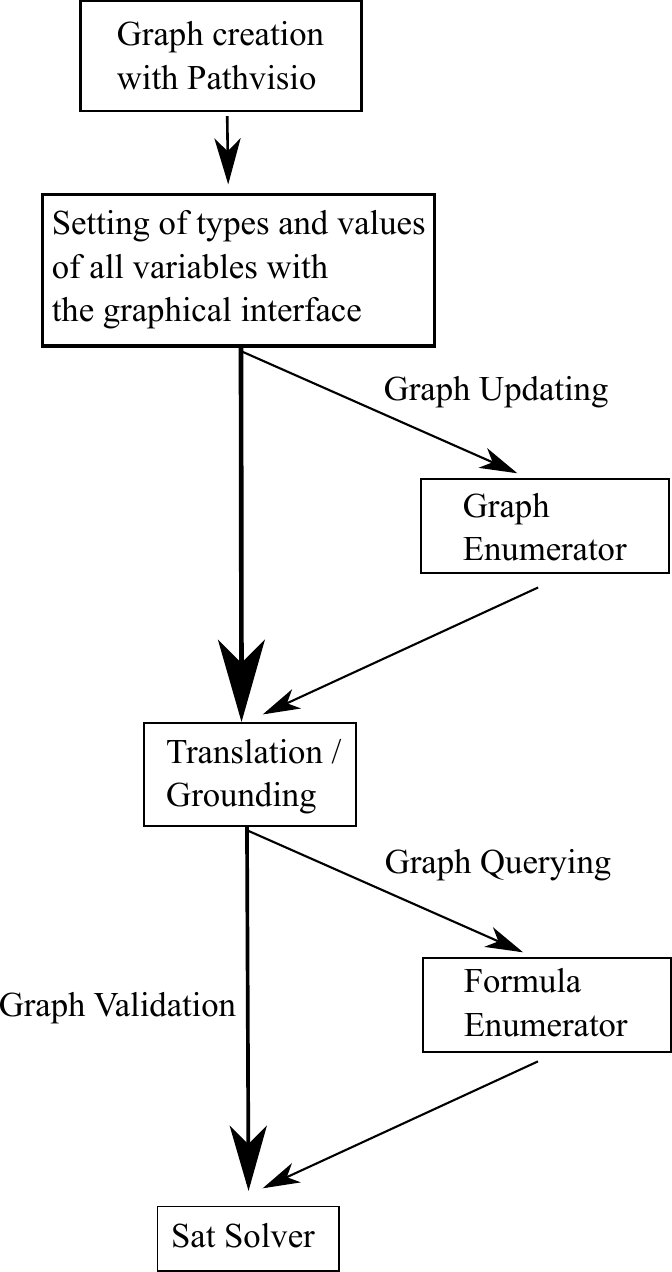}
\caption{Implementation}
\label{fig:implant}
\end{figure}

\hide{
\begin{description}
\item[Graph validation:]
  it consists in checking whether
  ${\cal G}$ satisfies some property expressed in temporal logic and
  it can be performed by doing \emph{model checking}~\cite{CGP99,QS82,VWS15,CGJ+03,HR04} on 
  the temporal translation.
  
\item[Graph querying:]
  it consists in finding the initial
  conditions that make ${\cal G}$ satisfy some temporal property
  $\varphi$. This operation is equivalent to perform logical \emph{abduction}
  ~\cite{CP93,CP95,EG95,Inoue92,DF91,CP17}. 
  
\item[Graph updating:]
  it consists in turning ${\cal G}$ into
  a new graph ${\cal G'}$ that satisfies  some property
  $\varphi$ which is not satisfied by the original ${\cal G}$. It is
  equivalent to apply \emph{model updating}~\cite{DMN06,DZ08,DZ06}.
\end{description}
}
%


\subsection{Setting of types and values of variables}

The system takes as input files representing MIMs as created by 
PathVisio\footnote{\url{https://github.com/PathVisio/pathvisio}},
  a free open-source biological pathway analysis software
that allows one to draw biological pathways. 
\hide{

The XML file produced by Pathvisio is taken as input by the system and
the graph is displayed to the user.

by our software and
displayed by a graphical interface.

Using Pathvisio can
also help us in retrieving already written graphs and using them for
our needs. Graphs are saved by Pathvisio in XML (GPML) format. This
means that graphs saved by Pathvisio can be read by any XML parser. 
}
The graph is displayed to the user, using colors and typefaces to
distinguish the types and initial values of atoms, which are
given a default value by the software tool based on ``commonsense''
rules.  
Figure~\ref{fig:lac2} shows how the software has set the variable types:
lacl, lacZ, CAMP and Lactose 
are in {\em bold} typeface, as they are set as {\em exogenous} variables, glucose,
galactosidase and repressor boxes are in {\em normal} typeface, as they are {\em
  endogenous}. 

\red{Variables initial values are shown by use of different
colors: by default, the initial values of all variables are {\em
  unset} and their names will be shown in black. Henceforth, 
atoms whose initial value is not set will be called
  {\em free}.}
\hide{
\begin{comment}
I have changed Figure~\ref{fig:lac2}, because now the software leaves
all values initially unset. And I have consequently modified the
sentence describing the figure.
\hide{
Figure~\ref{fig:lac2} shows how the software has set
types and values of variables: lacl, lacZ, CAMP and Lactose 
are in {\em bold} typeface, as they are set as {\em exogenous} variables, glucose,
galactosidase and repressor boxes are in {\em normal} typeface, as they are {\em
  endogenous}. 
}

Old Figure:
\begin{center}
\includegraphics[width=0.5\linewidth]{./graphics2/lac10008.png}
\end{center}

Old description:
The values of endogenous  atoms are initialized as absent
(red) and those of exogenous atoms are not set (black). Henceforth, 
atoms whose initial value is not set will be called
  {\em free}. Normally, 
all the endogenous variables values are initialized
as absent, while no initial value 
is set for the exogenous ones. In other terms, the default graph
corresponds to a MIG whose 
initial conditions $\cal B$ does not mention exogenous variables,
i.e. they do not occur, neither 
positive nor negated, in $\cal B$.
\end{comment}
}
\begin{figure}
\begin{center}
\includegraphics[width=0.6\linewidth]{./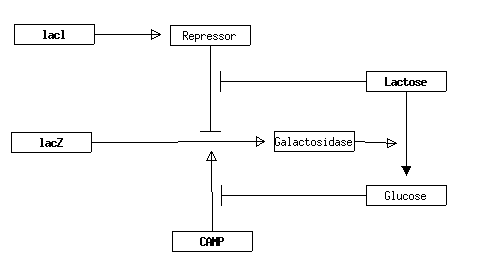}
\end{center}
\caption{The lac operon after the default initialization of variables types and
  values} 
\label{fig:lac2}
\end{figure}
The user is  allowed to change {both types and  initial values
  of atoms}. 
Figure~\ref{fig:lac3} shows the graph when  the user has modified the
values of some
variables: lacl, lacZ and CAMP are {\em green}, to indicate that they
are {\em present} at the start of the process (they will remain
present since they are exogenous atoms). Repressor is {\em green}, as
the repressor protein is supposed to be in the cell at the start of
the process. Lactose remains {\em black} since it is a free
atom, about which the user is going to query the system. Initially
absent variables (Galactosidase and Glucose) are shown in {\em red}.
\begin{figure}
\centering
\includegraphics[width=0.5\linewidth]{./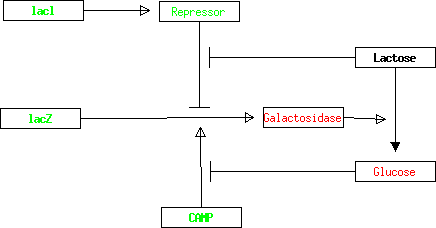}
\caption{Lac operon after the modifications}
\label{fig:lac3}
\end{figure}

\hide{
\begin{comment}
I wonder whether the description above is too detailed. Maybe it could
be better to replace it with the summary given in IJCAI submission.
I copy it below (maybe the references to figures etc are not
correct). When I have time, I'll try to merge the two descriptions.

 The graph is
displayed to the user, who can select the types and
values of the input variables. 
For instance, for the MIM in Figure~\ref{fig:mimlac}, 
the user is allowed to set types and values of variables as follows: lacl,
lacZ, CAMP and Lactose are 
set as {\em exogenous} variables while glucose, galactosidase and
repressor are set as {\em endogenous}. The system also ``guesses'' 
the initial variable 
values, showing them by use of different colors.  
Normally, 
all the endogenous variables values are initialized as absent, while
no initial value is set for the   exogenous ones. In other terms, the
default graph corresponds to a MIG whose initial conditions  $\cal B$
does not mention exogenous variables, i.e. they do not occur, neither
positive nor negated, in $\cal B$. The variables for which an
initial value is not set are called {\em free}. The user is also
allowed to change the initial variable values.

\end{comment}
}

Other parameters, such as the number of time steps, the number of
modifications to make for graph updating,  queries etc. are
 set via the command line. 


\subsection{Resolution engine}
The resolution engine is able to perform the following reasoning
tasks.

\begin{description}
\item[Graph validation.] 
\red{This task consists in checking whether the
  graph 
  ${\cal G}$ is consistent. The temporal encoding of ${\cal G}$ is
 grounded to the specified time and 
   the SAT solver Picosat is used in a straightforward way in
  order to check the consistence of the grounded theory. 
}
\hide{
This task consists in checking whether the
  graph 
  ${\cal G}$ satisfies some property expressed in temporal logic.
  In the implementation, the temporal encoding of ${\cal G}$ and the 
  negation of the property to be checked are grounded to the specified time. 
  Then the SAT solver Picosat is used in a straightforward way in
  order to check the consistence of the grounded theory. 
}

\hide{After having encoded the graph and performed the grounding to the
specified time \uline{of both the encoding and the negation of the property to be
checked},
 the SAT solver Picosat  is used in a straightforward way to check the consistency
of the so obtained grounded formulas. 

\begin{comment}
Please, check if I have understood well. The first presentation of this task
(that I have now hidden) says that: it consists in checking whether
  ${\cal G}$ satisfies some property expressed in temporal logic and
  it can be performed by doing \emph{model checking} on 
  the temporal translation. Then, when detailing the task, there was
  written that what is done is a 
  consistency check \uline{on the grounded formulae  describing the
    graph   and of the initial conditions and the final
    conditions}. Now, the graph includes the initial conditions, and I
  believe that the ``final conditions'' consist of the formula to be
  checked. But it has to be negated, hasn't it? 

Note also that in the IJCAI version this task is described as follows:
 the SAT solver is used in a straightforward way to check the consistency
of the grounded formulas describing the graph.

\end{comment}
}
\hide{
if only graph validation has to be done, then
  translation and grounding are directly performed (as described in
  section~\ref{sec:sat}) and a SAT solver~\cite{Picosat08} is
  used to check the consistency of the formulas describing the graph
  and of the initial conditions and the final conditions. The
  implementation is straightforward. 
}

\item [Graph querying.] This task consists in finding which initial
  values of the free atoms make ${\cal G}$ satisfy some temporal property
  $\varphi$. It is an abductive reasoning task~\cite{Inoue92}, that
 could be solved by use of
classical algorithms for computing prime implicants. But  we have
checked that, for instance, the Kean and
Tsiknis algorithm~\cite{Kean90} results to be very slow even when the
total number of atoms is small.
\hide{

as already stated, 
graph
  querying is equivalent to logical abduction.   If we call $D$ 
  the set of CNF formulas after grounding into propositional logic,
  and $Q$ the question, then
  we first check whether $D\cup \{\neg Q\}$ is
  consistent (if it is not consistent then $Q$ is already an implicate
  of $D$). Then
  we search for the minimal set $H$ such that $Q$ is an implicate of
  $T\cup H$. The classical algorithm consists in computing the set
  ${\cal  P}(D\cup \{\neg Q\})$, which is the set of the prime
  implicates (the strongest clausal consequences) of
  $D\cup \{\neg Q\}$, and then checking, for each
  $x \in {\cal P}(D\cup \{\neg Q\})$, whether $D\cup \{\neg x\}$ is
  consistent. Then each such $x$ is a solution.
This has been implemented by using the Kean and
Tsiknis algorithms~\cite{Kean90} for computing prime
implicates. However, this algorithm is slow, even when the number of
variables is not large, and its complexity depends on the total number of
 variables (both exogenous and endogenous). 
}
However, 
biologists are usually only interested by the values of the free
atoms. 
Since their number is often quite
small, it is usually faster to use Picosat to solve iteratively all
possible models. 
In other terms, all the possible combinations of initial values for
free atoms are generated (by the 
{\em formula enumerator} of figure~\ref{fig:implant}) and 
\tocheck
the SAT solver is run on
each of the so-obtained initial conditions.
The system,
tested on graphs with up to 22 nodes and 41 relations, showed to be
effective up to roughly 16 to 20 free 
atoms depending on the complexity of the map. 
\hide{
The generation of all the possible combinations of initial values for
free atoms 
is called the
{\em formula enumerator} in figure~\ref{fig:implant}.
All possible sets of values are tested and the SAT solver is run on
each of them.
}

\red{In performing this task,  exogenous and
  endogenous atoms can be  treated differently: the user can either
  ask which values of all the free variables imply the given property,
  or else to find out which values of the free exognenous atoms
  guarantee that {\em for all values of the free endogenous ones} the
  query holds at the given time.}

\item[Graph updating.] 
Given a graph 
${\cal G}$ for which a given property   $\varphi$ does not hold, this
task 
consists in turning ${\cal G}$ into
  a new graph ${\cal G'}$ satisfying $\varphi$.
\hide{
 task
consists in turning the graph ${\cal G}$ into
  a new one ${\cal G'}$ that explains a property
  $\varphi$ which is not explained
\begin{comment}{M: I changed satisfied by explained} 
\end{comment}
by the original ${\cal G}$. It is
  equivalent to apply \emph{model updating}~\cite{DMN06,DZ08,DZ06}
\begin{comment}{M: I am not so sure if this comment is appropiate
    here}.
\end{comment}
}
This  is the most complex task, since there is a very large number
of possible graphs solving the problem. Currently, the system computes
all graphs ${\cal G}'$ that can be obtained from ${\cal G}$ by adding, removing or
modifying a single  relation (this step is
called the {\em graph enumerator} in figure~\ref{fig:implant}). 
 Then for each ${\cal G}'$,
 graph querying on ${\cal G}'$ and $\varphi$ is performed, in order to filter
out those 
which do not satisfy $\varphi$.  .

\hide{
the system performs all
modifications that require modifying, adding or deleting one and
only one relation, and solves the graph querying problem for each of
them. This step is called the {\em graph enumerator} in
figure~\ref{fig:implant}. 
\begin{comment}
This description goes on saying: 

It
generates all possible graphs with only one modification starting from
the original graph. Then 
translation  and grounding are performed for
each of them. 
 \uline{It is also possible} (?) to perform graph querying for each
graph, using the formula enumerator. Then  the SAT solver is called
for each generated graph and 
each possible set of variables.

But it seems to me that all this things are already said before,
because the graph querying problem already performs encoding and
grounding and calls Picosat with each possible initial setting of free
atoms.
\end{comment}}

\end{description}

\section{Examples}\label{sec:examples}

The software tool has been tested on graphs with up to 20 atoms, 22 nodes and 41
links. In this section we show some
examples of 
the two most complex tasks: graph querying
and graph updating. 

\subsection{Graph querying}
\label{sec:abduction}
A more complex example will be considered here, {i.e.}, 
a meaningful  part of the map presented in
Gigure~\ref{fig:pommier},  
 the {\em atm-chk2} metabolic pathway, which leads to cellular
apoptosis when the DNA double strand breaks. 
DNA double strand break ({\em dsb}) is a major cause of cancers, and
medical and pharmaceutical research~\cite{Kohn:2005,Glorian:2011} have
shown that dsb can occur in a cell as the result of a pathology in a
metabolic pathway.
This kind of map is used to find the molecular determinants of tumoral
response to cancers. Molecular parameters included the metabolic
pathways for repairing DNA, the metabolic pathways for apoptosis, and
the metabolic pathways of cellular cycle
control~\cite{Pommier:2005,Kohn:2005,Glorian:2011,Lee:2007,peimmset2011}.
When DNA is damaged, cellular cycle control points are activated and
can quickly kill the cell by apoptosis, or stop the cellular cycle to
enable DNA repair before reproduction of cellular division. Two of
these control points are the metabolic pathways {\em atm-chk2} and
{\em atr-chk2}~\cite{Pommier:2005}.

The graph of Figure~\ref{fig:atmchk2} (built from the map in
Figure~\ref{fig:pommier}) represents the metabolic
pathway {\em atm-chk2} which can lead to apoptosis in three different
ways.
This map involves 20 variables, six of which ({\em atm, dsb, chk2,
mdm2, pml} and {\em p53}) are exogenous and the 
rest endogenous.
Some of these variables are proteins, others, such as {\em dsb} or {\em apoptose}, representing  cell
death, are conditions or states.

\begin{figure}
\centering
\includegraphics[width=\linewidth]{./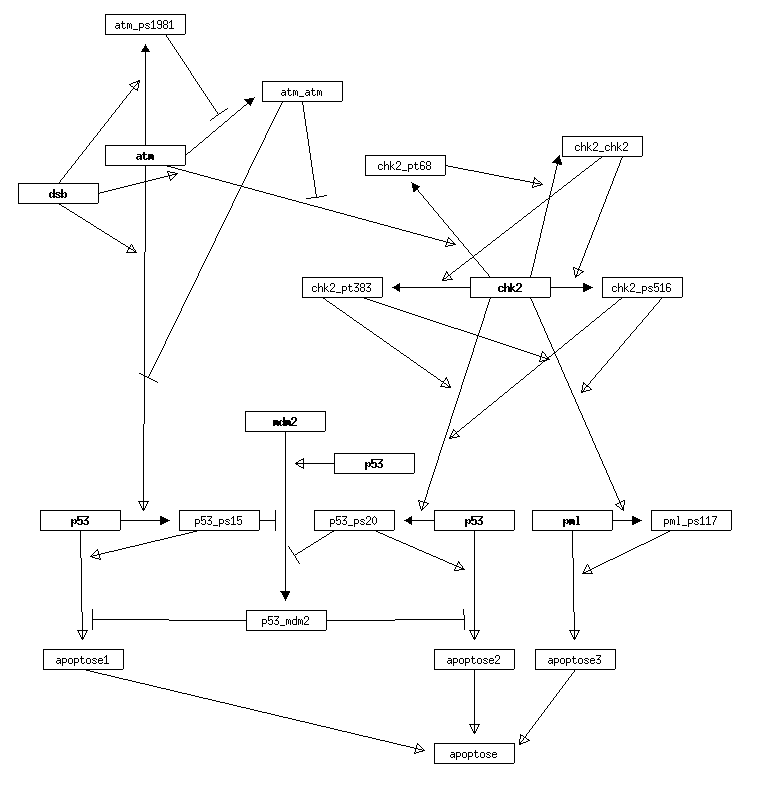}
\caption{The Molecular Interaction Map {\em atm-chk2}}
\label{fig:atmchk2}
\end{figure}

The time required for solving graph querying problems depends on the
number of free variables and  time steps. 
\hide{
considering 
$t$ time steps, there are $20+42t$ variables and $14+145t$ clauses.
If
we have $n$ free variables, the SAT solver is called $2^n$
times, because only the values of free variables are interesting for
the user.  }
The P3M solver has been called on this graph to find out  what would cause 
the cell
apoptosis. It has been tested with different grounding
values $g$, ranging from 1 to 50, and queries to find out the
initial conditions that make the atom $apoptose(g)$ derivable, i.e., 
the conditions causing  cell apoptosis at time $g$.
The system has been tested  with a number of
free variables ranging from 6 (only exogenous variables are free)
to 20 (all  variables are set to free, thus asking the
system to find also their initial values). 

\hide{
\begin{figure}
\begin{center}
  \subfloat[Time as a function of the number of grounding steps and free variables]{\includegraphics[width=\linewidth]{graphics2/timings.png}}\\
  \subfloat[Time for 100 grounding steps]{\includegraphics[width=0.4\linewidth]{graphics2/timings2.png}}
  \subfloat[Time for 20 free variables]{\includegraphics[width=0.4\linewidth]{graphics2/timings3.png}}
\end{center}
\caption{Timings}
\label{fig:timings}
\end{figure}
}

\begin{figure}
\begin{center}
  {\includegraphics[width=\linewidth]{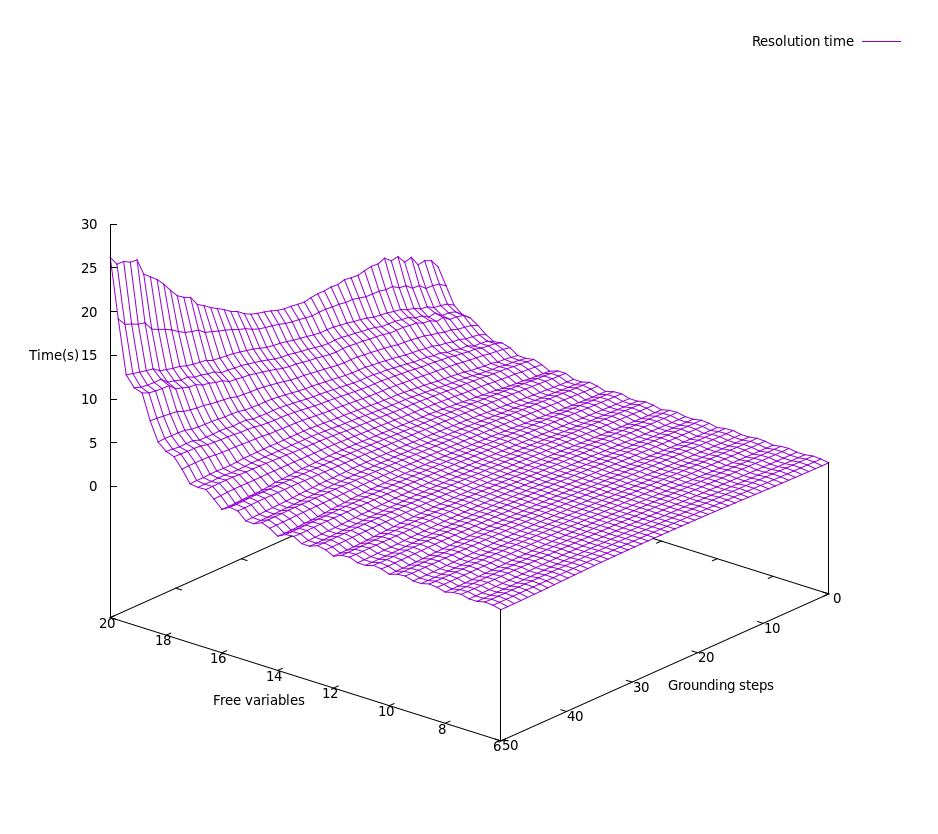}}
\end{center}
\caption{Time as a function of the number of grounding steps and free variables}
\label{fig:timings}
\end{figure}

The 3D diagram in Figure \ref{fig:timings} plots the grounding values
and the number of free variables 
against the time taken by the system to solve the problem, by
calling Picosat (the time taken to encode the graph into propositional logic is negligible).
 From the diagram, it is clear
that the number of free variables is the bottleneck, as it was actually
expected since the time required to solve the problem is exponential in
the number of free variables.  Moreover, 50 time steps are overkill, most systems reaching a
stable state in less than 10 time steps.

\hide{
The size of the set of
  propositional formulae encoding the problem is proportional to the
  grounding value. The diagram in Figure \ref{fig:cnf} shows that 
the size of the encoding grows with the total time with the same rate as
the grounding value does. When considering the weight of  solving
time {w.r.t.}  total time, it must also be taken into account that,
in this example, a solution can be found already with grounding 3,
therefore the difficulty of the 
reasoning task does not increase significantly with the
grounding values.
\begin{comment}
Maybe this should be put in relation with Theorem 
\ref{encoding-complexity}
\end{comment}

\begin{figure}[htb]
\begin{center}
\includegraphics[width=\columnwidth]
{graphics2/ex3-cnf.pdf}
\end{center}
\caption{Size of the propositional encoding of the problem}
\label{fig:cnf}
\end{figure}
}

The questions asked to the system can be refined, in order to find
out, for instance,  how much time is required to reach apoptosis on
each of the three possible ways, and which are the initial conditions
which lead to each of them.  The questions to ask are {\em apoptose1(i)},
{\em apoptose2(i)} and {\em  apoptose3(i)}, for different values of  $i$, where a
\tocheck
query of the form $p(i)$ means that one looks for an explanation of
$p$ being true at time step $i$.
The answers given by the system  show that:
\begin{itemize}
\item
  {\em apoptose1} can be obtained is the in the fastest way: 
\tocheck
{\em apoptose1(2)} ({\em apoptose1}
  holding  at the second time
  step) is true if {\em atm}, {\em dsb} and {\em p53} are present, and
  {\em mdm2} is
  absent (the values of {\em pml} and {\em chk2} do not matter).  For $i\geq 3$,
  the answer to {\em apoptose1(i)} is the same, but {\em mdm2} does not matter
  any longer ({\em p53\_mdm2} is dissociated at step 2).
\item obtaining
  {\em apoptose2} requires 5 time steps; 
{\em atm}, {\em chk2}, {\em dsb}, {\em p53} have
  to be present, and {\em mdm2} and {\em pml} do not matter. 
\item
  {\em apoptose3} requires the same number of steps as {\em apoptose2} but
  the initial 
conditions are different: {\em atm}, {\em chk2}, {\em dsb}, and {\em pml} have to be
  present, while {\em mdm2} and {\em p53} do not matter.
\end{itemize}

\subsection{Graph updating}
\label{sec:mise}
Figure~\ref{fig:lac1b1} shows the map of the lac operon where 
 the inhibition of lactose on the negative regulation of the
repressor to the production of galactosidase has been suppressed. 
So here, glucose is not
produced anymore when lactose is present. 
\begin{figure}
\centering
\includegraphics[scale=0.5]{./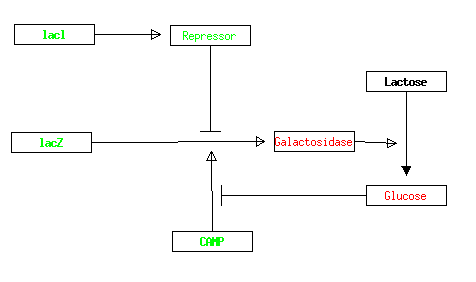}
\caption{Lac operon without inhibition by lactose}
\label{fig:lac1b1}
\end{figure}
The user can ask the system what modifications could be done in
order to produce glucose when lactose is present. 
The ``correct'' solution is found immediately (Figure~\ref{fig:lac1b2}),
along with others. Some of these other generated solutions have no interest,
such as the direct production of glucose by genes lacZ or lacl. 
But the system also proposes reasonable solutions, such as that shown
in Figure~\ref{fig:lac1b3}, where glucose is used to provide the inhibiting
action for the repressor protein. When glucose is present, the
production of galactosidase is stopped, while it is done when glucose
is absent. However nature has chosen the more economical solution,
because here galactosidase would be produced as soon as glucose is
absent, which is useless if there is no lactose. 

\begin{figure}[htb]
    \centering
    \includegraphics[scale=2]{./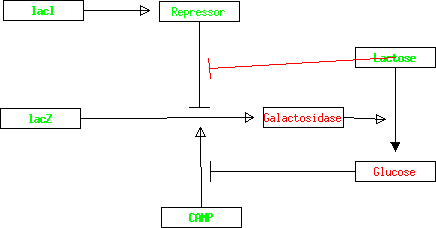}
    \caption{Correct solution}
    \label{fig:lac1b2}
\end{figure}

\begin{figure}[htb]
    \centering
    \includegraphics[scale=2]{./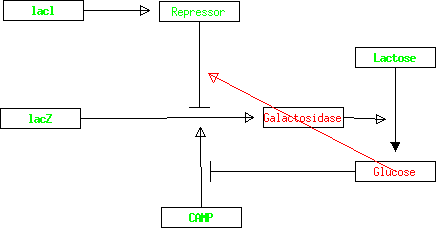}
    \caption{Another interesting solution}
    \label{fig:lac1b3}
\end{figure}

\section{Conclusion}
\label{sec:conclusion}
This paper presents a method to translate MIMs,
representing biological systems, into Linear Temporal Logic, and a software
tool able to
solve complex questions on these graphs. The system, though still a 
prototype, is able to solve quite realistic examples of a
large size. 

The proposed approach can be improved in different directions. On the
theoretical side, it is worth remarking that the speed of reactions is
not taken into account. This limitation could be overcome by using the
dual of speed (duration) and by using a logic that represents the
duration of reactions. Moreover, the system relies on the ``all or
nothing'' hypothesis: we do not represent quantities other than
``absent'' or ``present''. 
As a consequence, all productions that
\tocheck
are enabled at a given time are fired simultaneously, since they do
not compete on the use of resources.
Even if we have been able to efficiently model 
complex graphs with this constraint, an important step forward to be
planned is modelling a more realistic evolution of networks by taking
quantities into account. 

On the practical point
of view, the possibility should be explored to avoid grounding and
\tocheck
replacing the 
{\em formula enumerator} procedure of P3M by
implementing a direct abduction algorithm for (a suitable fragment of)
LTL, as proposed in \cite{CP17}, or else 
by directly using  temporal model checkers \cite{clarke-2003}, or
tools like RECAR 
(Recursive Explore and Check Abstraction Refinement ) \cite{lagniez-17} which
allows one to solve modal satisfiability problems . 

 Moreover, 
the software tool can be improved in several respects like,
for instance, 
improving the graphical
interface by enriching the number of parameters the user can choose
and making it more user friendly.



\bibliographystyle{plain}
\bibliography{refs}

\end{document}